\tikzset{%
  middle dotted line/.style={
    decoration={show path construction,
      lineto code={
          \draw[#1] (\tikzinputsegmentfirst) --($(\tikzinputsegmentfirst)!.3333!(\tikzinputsegmentlast)$);,
          \draw[dotted,#1] ($(\tikzinputsegmentfirst)!.3333!(\tikzinputsegmentlast)$)--($(\tikzinputsegmentfirst)!.6666!(\tikzinputsegmentlast)$);,
          \draw[#1] ($(\tikzinputsegmentfirst)!.6666!(\tikzinputsegmentlast)$)--(\tikzinputsegmentlast);,
      }
    },
    decorate
  },
}
\DeclareMathOperator{\inc}{inc}
\begin{document}
\newcommand{\dist}{\ensuremath{\operatorname{dist}}}
\newcommand{\scheduleLength}{\ensuremath{\mu}}

\newcommand{\MAPF}{\textsc{Multiagent Path Finding}\xspace}
\newcommand{\MAPFCC}{\textsc{Multiagent Path Finding with Communication Constraint}\xspace}
\newcommand{\MAPFShort}{\textsc{MAPF}\xspace}
\newcommand{\MAPFCCShort}{\textsc{MAPFCC}\xspace}
\newcommand{\MSO}{\textsf{MSO}\xspace}
\newcommand{\MSOtwo}{\textsf{MSO\textsubscript{2}}\xspace}
\newcommand{\tw}{\ensuremath{\operatorname{tw}}}
\newcommand{\N}{\mathbb{N}}

\newtheorem{lemma}{Lemma}
\newtheorem{observation}{Observation}
\newtheorem{definition}{Definition}
\newtheorem{theorem}{Theorem}
\newtheorem{corollary}{Corollary}
\newtheorem{claim}{Claim}
\newenvironment{proofclaim}{\noindent{\em Proof of the claim.}}{\qedclaim}
\newcommand{\qedclaim}{\hfill $\diamond$ \medskip}
\newenvironment{sketch}{\noindent{\it Sketch of proof.}}{\qedclaim}

\newcommand{\Integers}[1]{[#1]}
\newcommand{\intzero}[1]{\{0, \dots, #1\}}

\title{Exact Algorithms for Multiagent Path Finding with Communication Constraints on Tree-Like Structures}

\author[label1]{Foivos Fioravantes}
\author[label1]{Dušan Knop}
\author[label1]{Jan Matyáš Křišťan}
\author[label1]{Nikolaos Melissinos}
\author[label1]{Michal Opler}

\affiliation[label1]{organization={Department of Theoretical Computer Science, Faculty of Information Technology, Czech Technical University in Prague},
            city={Prague},
            country={Czech Republic}}


\begin{abstract}
Consider the scenario where multiple agents have to move in an optimal way through a network, each one towards their ending position while avoiding collisions. By optimal, we mean as fast as possible, which is evaluated by a measure known as the makespan of the proposed solution. This is the setting studied in the \MAPF problem. In this work, we additionally provide the agents with a way to communicate with each other. Due to size constraints, it is reasonable to assume that the range of communication of each agent will be limited. What should be the trajectories of the agents to, additionally, maintain a backbone of communication? In this work, we study the \MAPFCC problem under the parameterized complexity framework.

Our main contribution is three exact algorithms that are efficient when considering particular structures for the input network. We provide such algorithms for the case when the communication range and the number of agents (the makespan resp.) are provided in the input and the network has a tree topology, or bounded maximum degree (has a tree-like topology, i.e., bounded treewidth resp.). We complement these results by showing that it is highly unlikely to construct efficient algorithms when considering the number of agents as part of the input, even if the makespan is $3$ and the communication range is $1$.
\end{abstract}




\maketitle

\section{Introduction}
The \MAPF{} (\MAPFShort for short) problem is a well-known challenge in the field of planning and coordination.
It involves navigating multiple agents through a topological space, often modeled as an undirected graph, to reach their respective destinations.
In many real-world scenarios, additional constraints on the agents' movements are required.
One such constraint is the \emph{communication} constraint, which requires agents to maintain a connected set of vertices in a communication graph as they move; this is then the \MAPFCC{} (\MAPFCCShort for short) problem.
This requirement can arise, for example, from the need to constantly communicate with a human operator~\cite{amigoni2017multirobot}.
Sometimes, only a periodic connection might be sufficient~\cite{hollinger2012multirobot}.
On the other hand, applications in a video game movement of agents~\cite{SnapeGBLM12} should require near-connectivity, since we want the group of virtual soldiers to move in a mob.

It should also be noted that the communication constraints we consider are born as a natural first step towards further understanding and providing new insights into solving the \MAPFShort problem in the distributed setting. We believe that such a setting, where each agent needs to do some local computation taking into account only a partial view of the network and the subset of the other agents that are withing its communication range, is rather natural and worth investigating. Such a framework is particularly well-suited for exploring the emergence of swarm intelligence through agent cooperation.

The complexity of the \MAPF{} problem increases significantly when the movement and communication graphs are independent of each other.
In fact, under these conditions, the problem is \PSPACE-complete~\cite{tateo2018multiagent}.
This raises a natural question: Is the problems' complexity equally severe when the movement and communication graphs are related?
For instance, if we assume that communication among agents occurs within the same space they are navigating, it is reasonable to model the communication graph as identical to the movement graph.
Alternatively, we could consider scenarios where the communication graph is a derivative of the movement graph, such as its third power, allowing agents to communicate over a distance of three edges in the original graph.
However, the problem stays \PSPACE-complete even if the agents move in a subgraph of a 3D grid and the communication is based on radius~\cite{calviac2023improved}.
We refer the reader to the next section for the formal definitions.

Both \MAPF{} and \MAPFCC{} problems are systematically studied; most researchers deal with the hardness using specific algorithms or heuristics.
The most popular approaches to find optimal solutions are using the A* algorithm (e.g. \cite{sharon2015conflict,sharon2013increasing}) or ILP solvers (e.g. \cite{yu2013planning}).
Another popular line of research used the Picat language~\cite{zhou2015constraint,BartakZSBS17}.
A wide range of heuristics is commonly used, such as those based on local search (WHCA*, ECBS), SAT solvers (e.g. \cite{SurynekSBF22}), or reinforcement learning (e.g. \cite{guptaEK2017}) to name just a few.
The WHCA* (Windowed Hierarchical Cooperative A*) approach was described and analyzed by Silver~\cite{silver2005cooperative,korf1990real}.
The ECBS (Enhanced Conflict-Based Search) approach was used by \citet{barer2014suboptimal}; similarly for the Improved CBS~\cite{BoyarskiFSSBTS15}.
For more related references, the reader might visit some of the more recent surveys on this subfield~\cite{FelnerSSBGSSWS17,surynek2022survey,SternSFK0WLA0KB19manysurvey}.

Since our paper deals with \MAPFCC{} from a theoretical point of view, we are mostly interested in the computational complexity study of it.
The study of similar-nature problems started long ago~\cite{Wilson74,KornhauserMS84,Goldreich2011} and was mostly related to puzzle games.
Many of these games were shown to be \PSPACE-complete~\cite{hearn2005pspace}.
\citet{surynek2010optimization} provided a direct proof that \MAPF{} is \NP-hard.
We stress here that the most ``direct'' argument for \NP-membership (i.e., by providing a solution) does not often work for \MAPFCCShort{}-alike problems since some agents might need to revisit some vertices in every optimal solution.
Consequently, solutions that minimize the makespan may require superpolynomial time.
Last but not least, \citet{EibenGK23,FKKMO24} studied the parameterized complexity of \MAPFShort{} and provided initial results for tree-like topology~$G$.

\paragraph{Our Contributions.}
We study the complexity of the \MAPFCCShort{} problem from the viewpoint of parameterized algorithms~\cite{downey2012parameterized}.
As the main parameter, we select the number of agents $k$ and prove that \MAPFCCShort{} is \W[1]-hard even if the makespan $\ell$ is~3, and the communication range $d$ is~1 (on the input graph); see Theorem~\ref{thm:hard-k-l-d}. In particular, this means that any single parameter among $k$, $\ell$ and/or $d$ is highly unlikely to lead to an FPT algorithm. The same holds true for the combined parameters $k+d$, $k+\ell$, $\ell+d$ and $k+d+\ell$.
We contrast this extremely negative result with algorithms that manage to escape its hardness.

We show that if we parameterize jointly by the number of agents, the maximum degree of the input graph, and the communication range, then the size of the configuration network is bounded by a function of these parameters (Theorem~\ref{thm:fpt-d-k-Delta}).
Therefore, the problem is in \FPT and so is the size of the configuration network that is often used in the A*-based approaches to \MAPFShort{}.
Next, we show that if the input graph~$G$ is a tree, we can obtain an \FPT algorithm with respect to the number of agents plus the communication range (Theorem~\ref{thm:fpt-d-k-tree}).
Intuitively, the idea is to use the communication range to prune the input tree and invoke Theorem~\ref{thm:fpt-d-k-Delta}.

It is natural to try to leverage the algorithms from trees to graph families that have bounded treewidth.
We do this for a slightly different combination of parameters.
That is, \MAPFShort{} is \FPT{} for the combination of the treewidth of the graph~$G$, makespan, and the communication range (Theorem~\ref{thm:fpt-tw-d-l}).
This result is achieved by bounding the treewidth of the so-called augmented graph---introduced by \citet{GOR21}---which adds edges between the start and end vertices of each agent.
We then formulate \MAPFShort{} using a Monadic Second Order (MSO) logic formula which can be decided by the result of \citet{Courcelle90}.

This result not only highlights the structure of the augmented graph, which could be of independent interest in future research, but also suggests the potential utility of generic MSO solvers (e.g., \cite{Langer13,BannachB19,Hecher23}) for practical applications.
Moreover, by parameterizing with the number of agents, we extend our results to scenarios based on the local treewidth rather than the global treewidth (Corollary~\ref{cor:fpt-ltw-d-l-k}).
Despite this being a rather technical parameter, it does lead to pertinent results. For example, we obtain an \FPT{} algorithm for planar graphs with respect to the number of agents, makespan, and the communication range (Corollary~\ref{cor:fpt-planar-d-l-k}).
We stress here that many minor-closed graph classes have a bounded local treewidth; the class of planar graphs is just a single representative.
Note that this is in contrast to Theorem~\ref{thm:hard-k-l-d} as the parameterization is the same and the only difference is the graph class.

\section{Preliminaries}
Formally, the input of the \MAPF{} problem consists of a graph $G=(V,E)$, a set of agents~$A$, two functions $s_0\colon A \rightarrow V$, $t\colon A \rightarrow V$ and a positive integer~$\ell$, known as the makespan.
For any pair $a , b \in A$ where $a \neq b$, we have that $s_0(a) \neq s_0(b)$ and $t(a) \neq t(b)$.
Initially, each agent $a \in A$ is placed on the vertex $s_0(a)$.
The schedule $s_0, s_1, \ldots, s_\scheduleLength$ assigns each agent a vertex in the given turn $i \in [\scheduleLength]$.
In a specific turn, agents are allowed to move to a vertex neighboring their position in the previous turn, but are not obliged to do so.
The agents can make at most one move per turn, and each vertex can host at most one agent at a given turn.
The position of the agents at the end of the turn $i$ (after the agents have moved) is given by an injective function $s_i\colon A \rightarrow V$.
It is worth mentioning that there are two main variants of the classical \MAPF{} problem, according to whether \textit{swaps} are allowed or not. A swap between two agents $a$ and $b$ during a turn $i$ is defined as the behavior where $s_{i+1}(a)=s_{i}(b)$ and $s_{i+1}(b)=s_{i}(a)$, with $s_i(a)$ and $s_i(b)$ being adjacent. In other words, a swap happens when two agents start from adjacent positions and exchange them within one turn.

In this paper, we consider the \MAPFCC (\MAPFCCShort for short) problem.
In this generalization of the classical \MAPF problem, each agent has the ability to communicate with other agents that are located within his \emph{communication range}, and it must always be ensured that there is a subset of agents that form a backbone ensuring the communication between all pairs of agents. This communication range is modeled by an integer $d$ that is additionally part of the input.

In order to define what is a feasible solution for the connected variant, we first need to define an auxiliary graph $D$; let us call this the \emph{communication} graph.
First, we set $V(D) = V(G)$.
Then, for every pair $u,v \in V(D)$ we add an edge in $D$ if and only if $\dist_G(u,v)\leq d$.
We say that a vertex set $W \subseteq V(G)$ is \emph{$d$-connected} if the induced subgraph $D[W]$ is connected.
We say that a sequence $s_1,\ldots,s_{\scheduleLength}$ is a \emph{feasible solution} of $\langle G, A, s_0, t,d, \ell\rangle$ if:
\begin{enumerate}
 \item\label{itm:schedule:neighbor} $s_{i}(a)$ is a neighbor of $s_{i-1}(a)$ in $G$, for every agent $a \in A$, $i\in[\scheduleLength]$,
 \item\label{itm:schedule:distinctPositions} for all $i \in [\scheduleLength]$ and $a , b \in A$ where $a \neq b$, we have that $s_i(a) \neq s_i(b)$,
 \item\label{itm:schedule:d-connected} the vertex set $\{s_i(a)\mid a \in A\}$ is $d$-connected, for every $i \in [\scheduleLength]$, and
 \item\label{itm:schedule:targets} for every agent $a\in A$,  we have $s_{\scheduleLength}(a)=t(a)$.
\end{enumerate}

Moreover, we do not allow swaps.
For ease of exposition, we will refer to the condition (\ref{itm:schedule:d-connected}) above as the \emph{communication constraint}.
A feasible solution $s_1,\ldots,s_{\scheduleLength}$ has \emph{makespan}~$\scheduleLength$.
Our goal is to decide if there exists a feasible solution of makespan~$\scheduleLength \leq \ell$. Fig.~\ref{fig:example} illustrates an example of a feasible solution.

\begin{figure}[!t]
\centering

\subfloat[The initial configuration.]{
\begin{tikzpicture}[scale=0.45, inner sep=0.6mm]
    \foreach \x/\y/\color [count=\i] in {0/0/white, 3/0/white, 6/0/white, 9/0/black, 0/3/white, 3/3/white, 6/3/white, 9/3/brown, 0/6/white, 3/6/white, 6/6/white, 9/6/blue, 0/9/white, 3/9/white, 6/9/white, 9/9/red} {
      \node[draw, circle, line width=1pt, fill=\color] (v\i) at (\x,\y) {};
    }

    \foreach \x/\y in {1/2, 2/3, 3/4, 5/6, 6/7, 7/8, 9/10, 10/11, 11/12, 13/14, 14/15, 15/16, 1/5, 5/9, 9/13, 4/8, 8/12, 12/16} {
      \draw[line width=0.5pt] (v\x) to (v\y);
    }

    \node[above = 0.1 of v1.west] () {\includegraphics[scale=0.07]{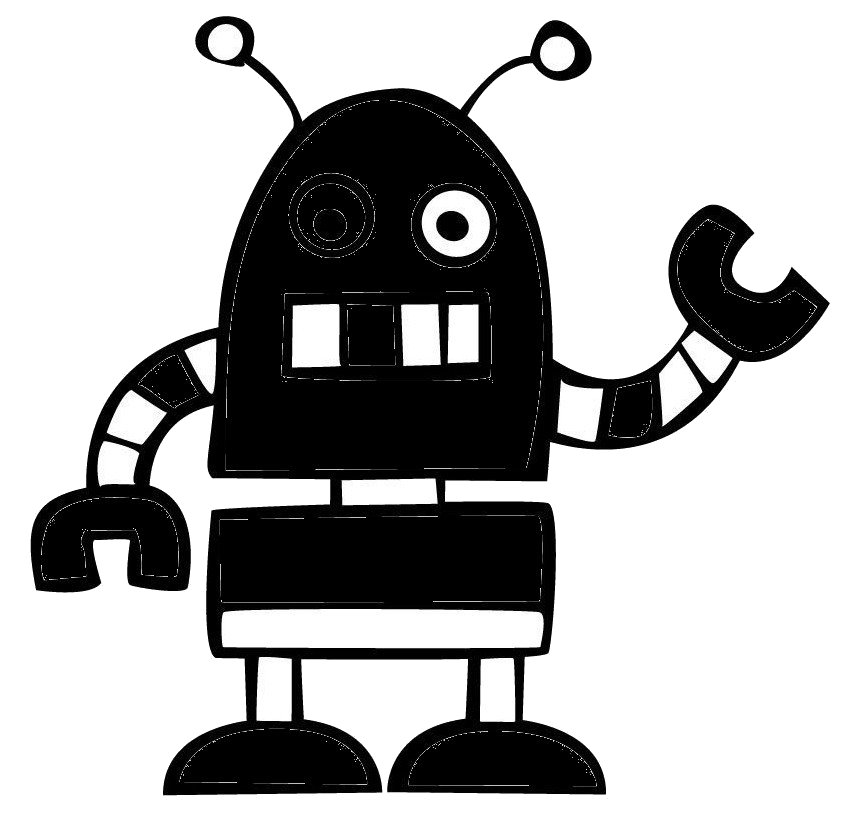}};
    \node[above = 0.1 of v5.west] () {\includegraphics[scale=0.07]{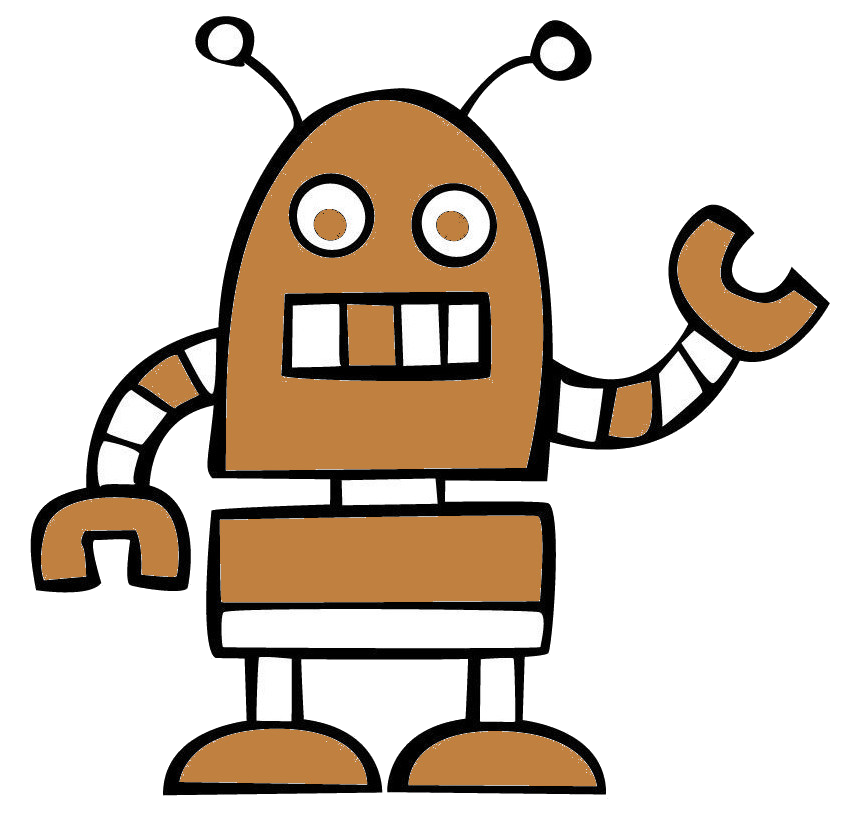}};
    \node[above = 0.1 of v9.west] () {\includegraphics[scale=0.07]{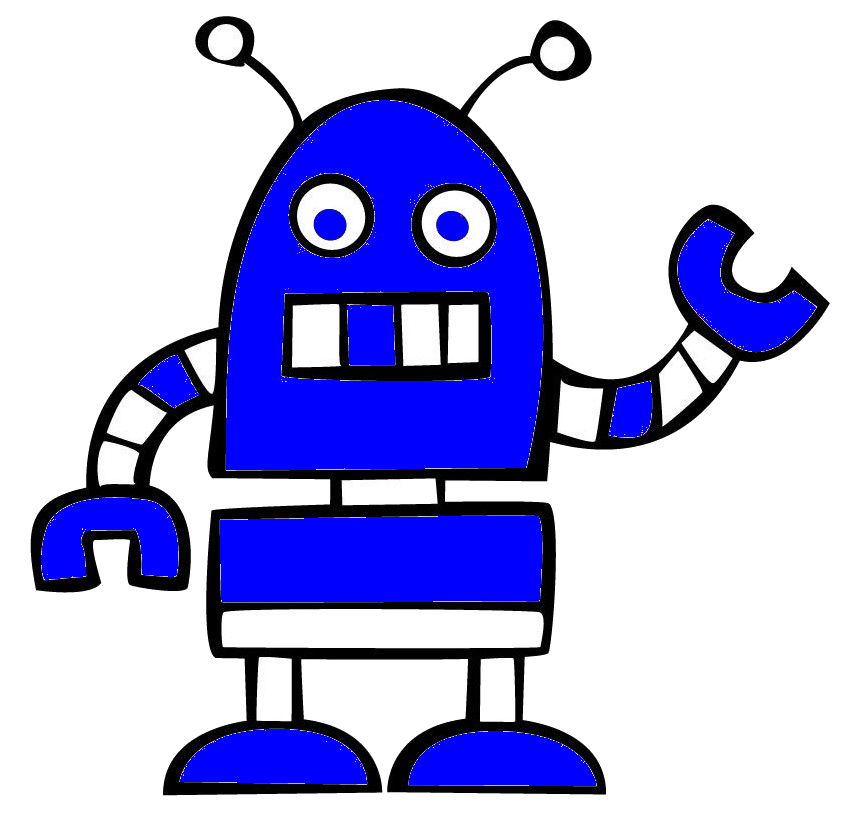}};
    \node[above = 0.1 of v13.west] () {\includegraphics[scale=0.07]{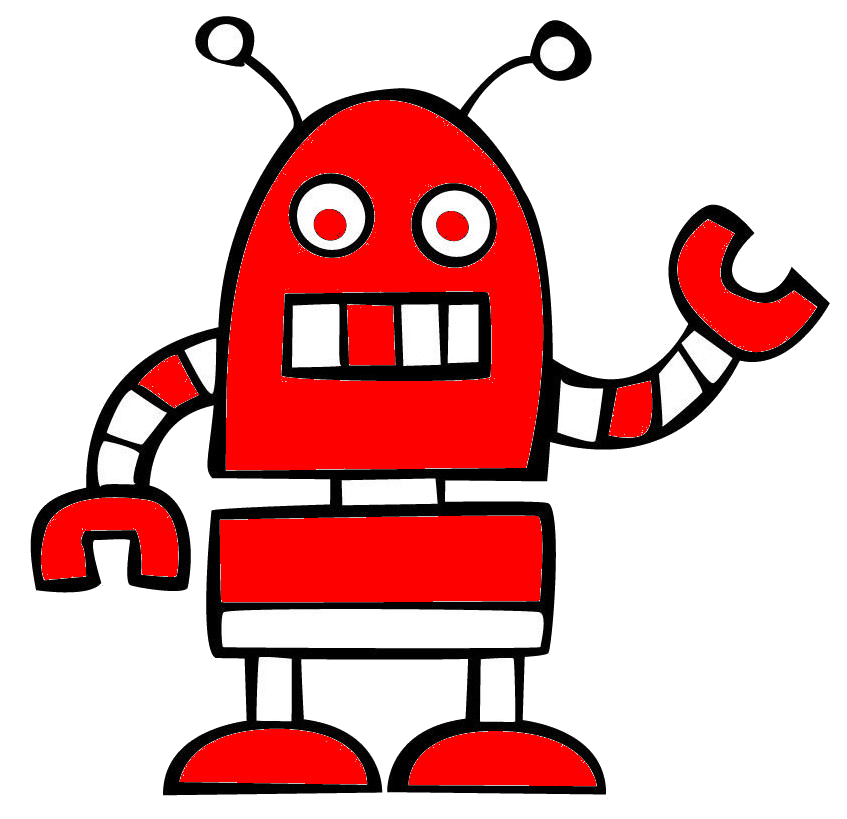}};

\end{tikzpicture}
}\hspace{10pt}
\subfloat[Turn $1$.]{
\begin{tikzpicture}[scale=0.45, inner sep=0.6mm]
    \foreach \x/\y/\color [count=\i] in {0/0/white, 3/0/white, 6/0/white, 9/0/black, 0/3/white, 3/3/white, 6/3/white, 9/3/brown, 0/6/white, 3/6/white, 6/6/white, 9/6/blue, 0/9/white, 3/9/white, 6/9/white, 9/9/red} {
      \node[draw, circle, line width=1pt, fill=\color] (v\i) at (\x,\y) {};
    }

    \foreach \x/\y in {1/2, 2/3, 3/4, 5/6, 6/7, 7/8, 9/10, 10/11, 11/12, 13/14, 14/15, 15/16, 1/5, 5/9, 9/13, 4/8, 8/12, 12/16} {
      \draw[line width=0.5pt] (v\x) to (v\y);
    }

    \node[above = 0.1 of v5.west] () {\includegraphics[scale=0.07]{robot-black.png}};
    \node[above = 0.1 of v9.west] () {\includegraphics[scale=0.07]{robot-brown.png}};
    \node[above = 0.01 of v10] () {\includegraphics[scale=0.07]{robot-blue.png}};
    \node[above = 0.1 of v13.west] () {\includegraphics[scale=0.07]{robot-red.png}};

\end{tikzpicture}
}\hspace{10pt}
\subfloat[Turn $3$.]{
\begin{tikzpicture}[scale=0.45, inner sep=0.7mm]
    \foreach \x/\y/\color [count=\i] in {0/0/white, 3/0/white, 6/0/white, 9/0/black, 0/3/white, 3/3/white, 6/3/white, 9/3/brown, 0/6/white, 3/6/white, 6/6/white, 9/6/blue, 0/9/white, 3/9/white, 6/9/white, 9/9/red} {
      \node[draw, circle, line width=1pt, fill=\color] (v\i) at (\x,\y) {};
    }

    \foreach \x/\y in {1/2, 2/3, 3/4, 5/6, 6/7, 7/8, 9/10, 10/11, 11/12, 13/14, 14/15, 15/16, 1/5, 5/9, 9/13, 4/8, 8/12, 12/16} {
      \draw[line width=0.5pt] (v\x) to (v\y);
    }

    \node[above = 0.01 of v10] () {\includegraphics[scale=0.07]{robot-black.png}};
    \node[above = 0.01 of v11] () {\includegraphics[scale=0.07]{robot-brown.png}};
    \node[above = 0.1 of v12.east] () {\includegraphics[scale=0.07]{robot-blue.png}};
    \node[above = 0.1 of v9.west] () {\includegraphics[scale=0.07]{robot-red.png}};

\end{tikzpicture}
}

\subfloat[Turn $6$.]{
\begin{tikzpicture}[scale=0.45, inner sep=0.6mm]
    \foreach \x/\y/\color [count=\i] in {0/0/white, 3/0/white, 6/0/white, 9/0/black, 0/3/white, 3/3/white, 6/3/white, 9/3/brown, 0/6/white, 3/6/white, 6/6/white, 9/6/blue, 0/9/white, 3/9/white, 6/9/white, 9/9/red} {
      \node[draw, circle, line width=1pt, fill=\color] (v\i) at (\x,\y) {};
    }

    \foreach \x/\y in {1/2, 2/3, 3/4, 5/6, 6/7, 7/8, 9/10, 10/11, 11/12, 13/14, 14/15, 15/16, 1/5, 5/9, 9/13, 4/8, 8/12, 12/16} {
      \draw[line width=0.5pt] (v\x) to (v\y);
    }

    \node[above = 0.1 of v12.east] () {\includegraphics[scale=0.07]{robot-red.png}};
    \node[above = 0.01 of v7] () {\includegraphics[scale=0.07]{robot-brown.png}};
    \node[above = 0.1 of v8.east] () {\includegraphics[scale=0.07]{robot-black.png}};
    \node[above = 0.01 of v6] () {\includegraphics[scale=0.07]{robot-blue.png}};

\end{tikzpicture}
}\hspace{10pt}
\subfloat[Turn $8$.]{
\begin{tikzpicture}[scale=0.45, inner sep=0.6mm]
    \foreach \x/\y/\color [count=\i] in {0/0/white, 3/0/white, 6/0/white, 9/0/black, 0/3/white, 3/3/white, 6/3/white, 9/3/brown, 0/6/white, 3/6/white, 6/6/white, 9/6/blue, 0/9/white, 3/9/white, 6/9/white, 9/9/red} {
      \node[draw, circle, line width=1pt, fill=\color] (v\i) at (\x,\y) {};
    }

    \foreach \x/\y in {1/2, 2/3, 3/4, 5/6, 6/7, 7/8, 9/10, 10/11, 11/12, 13/14, 14/15, 15/16, 1/5, 5/9, 9/13, 4/8, 8/12, 12/16} {
      \draw[line width=0.5pt] (v\x) to (v\y);
    }

    \node[above = 0.1 of v12.east] () {\includegraphics[scale=0.07]{robot-red.png}};
    \node[above = 0.1 of v4.east] () {\includegraphics[scale=0.07]{robot-brown.png}};
    \node[above = 0.01 of v3] () {\includegraphics[scale=0.07]{robot-black.png}};
    \node[above = 0.1 of v8.east] () {\includegraphics[scale=0.07]{robot-blue.png}};

\end{tikzpicture}
}\hspace{10pt}
\subfloat[Turn $9$.]{
\begin{tikzpicture}[scale=0.45, inner sep=0.6mm]
    \foreach \x/\y/\color [count=\i] in {0/0/white, 3/0/white, 6/0/white, 9/0/black, 0/3/white, 3/3/white, 6/3/white, 9/3/brown, 0/6/white, 3/6/white, 6/6/white, 9/6/blue, 0/9/white, 3/9/white, 6/9/white, 9/9/red} {
      \node[draw, circle, line width=1pt, fill=\color] (v\i) at (\x,\y) {};
    }

    \foreach \x/\y in {1/2, 2/3, 3/4, 5/6, 6/7, 7/8, 9/10, 10/11, 11/12, 13/14, 14/15, 15/16, 1/5, 5/9, 9/13, 4/8, 8/12, 12/16} {
      \draw[line width=0.5pt] (v\x) to (v\y);
    }

    \node[above = 0.1 of v16.east] () {\includegraphics[scale=0.07]{robot-red.png}};
    \node[above = 0.1 of v12.east] () {\includegraphics[scale=0.07]{robot-blue.png}};
    \node[above = 0.1 of v8.east] () {\includegraphics[scale=0.07]{robot-brown.png}};
    \node[above = 0.1 of v4.east] () {\includegraphics[scale=0.07]{robot-black.png}};

\end{tikzpicture}
}
\caption{An example of a feasible solution for the instance encoded in subfigure (a), for a communication constraint of $d=1$. The colors on the agents and the vertices are used to encode the terminal vertex of each agent. Note that if we drop the communication constraint, then this instance has a makespan of $3$, which is clearly unattainable for $d=1$.}\label{fig:example}
\end{figure}

\paragraph*{Parametrized Complexity.}
The parametrized complexity point of view allows us to overcome the limitations of classical measures of time (and space) complexity, by taking into account additional measures that can affect the running time of an algorithm; these additional measures are exactly what we refer to as parameters. The goal here is to construct exact algorithms that run in time $f(k)\cdot\operatorname{poly}(n)$, where $f$ is a computable function, $n$ is the size of the input and $k$ is the parameter. Algorithms with such running times are referred to as \textit{fixed-parameter tractable} (FPT).
A problem admitting such an algorithm belongs to the class \FPT.
Similar to classical complexity theory, there is also a notion of infeasibility. A problem is presumably not in FPT{} if it is shown to be \W[1]-hard (by a parameterized reduction).
We refer the interested reader to now classical monographs~\cite{CyganFKLMPPS15,Niedermeier06,FlumG06,downey2012parameterized} for a more comprehensive introduction to this topic.

\paragraph*{Structural Parameters and Logic.}
The main structural parameter that interests us in this work is that of \textit{treewidth}.
%
A \emph{tree-decomposition} of $G$ is a pair $(\mathcal{T},\beta)$, where~$\mathcal{T}$ is a tree rooted at a node $r\in V(\mathcal{T})$, $\beta\colon V(\mathcal{T})\to 2^{V}$ is a function assigning each node $x$ of $\mathcal{T}$ its \emph{bag}, and the following conditions hold:
\begin{itemize}
	\item for every edge $\{u,v\}\in E(G)$ there is a node $x\in V(\mathcal{T})$ such that $u,v\in\beta(x)$, and
	\item for every vertex $v\in V$, the set of nodes $x$ with $v\in\beta(x)$ induces a connected subtree of $\mathcal{T}$.
\end{itemize}
The \emph{width} of a tree-decomposition $(\mathcal{T},\beta)$ is $\max_{x\in V(\mathcal{T})} |\beta(x)|-1$, and the treewidth $\tw(G)$ of a graph $G$ is the minimum width of a tree-decomposition of~$G$.
It is known that computing a tree-decomposition of minimum width \NP-hard\cite{ACP87}, but it is fixed-parameter tractable when parameterized by the treewidth~\cite{Kloks94,Bodlaender96}, and even more efficient algorithms exist for obtaining near-optimal tree-decompositions~\cite{KorhonenL23}.

In our work, we make use of the celebrated Courcelle Theorem~\cite{Courcelle90}, stating that any problem that is expressible by a monadic second-order formula can be solved in FPT-time parameterized by the treewidth of~$G$.
\textsf{MSO} logic is an extension of first-order logic, distinguished by the introduction of set variables (denoted by uppercase letters) that represent sets of domain elements, in contrast to individual variables (denoted by lowercase letters), which represent single elements.
Specifically, we utilize \MSOtwo, a variant of \textsf{MSO} logic that allows quantification over both the vertices and edges.
This extension enables us to address a broader class of problems.
More generally, Courcelle's algorithm extends to the case when both the graph~$G$ and the \MSOtwo language are enriched with finitely many vertex and edge labels.

\section{The Problem is Very Hard}
In this section, we will prove the following theorem.
\begin{theorem}\label{thm:hard-k-l-d}
 The \MAPFCC{} problem is \W[1]-hard parameterized by the number of agents, even for $\ell=3$ and $d=1$.
\end{theorem}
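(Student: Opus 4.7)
The plan is to reduce from \textsc{Multicolored Clique}, which is well-known to be \W[1]-hard parameterized by the number $k$ of color classes. Given an instance consisting of a graph $H$ and a vertex partition $V(H)=V_1\cup\cdots\cup V_k$, I will construct an instance of \MAPFCCShort{} with $k' = k + \binom{k}{2}$ agents, makespan $\ell=3$, and communication range $d=1$. Since $k'$ is a computable function of $k$, this yields a valid parameterized reduction.

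The construction uses two kinds of agents. For each color $i\in[k]$ I introduce a \emph{selector} agent $a_i$ whose only length-$3$ paths from its start $\sigma_i$ to its target $\tau_i$ in $G$ pass through gadget vertices $p^i_v, q^i_v$ indexed by choices $v\in V_i$; since $\dist_G(\sigma_i,\tau_i)=3$, the trajectory of $a_i$ is forced to $\sigma_i\to p^i_v\to q^i_v\to \tau_i$ up to the choice of $v$. For each pair $i<j$ I add a \emph{verifier} agent $b_{ij}$ with an analogous gadget indexed by the edges $e$ of $H$ between $V_i$ and $V_j$, forcing the trajectory $\sigma_{ij}\to r^{ij}_e\to s^{ij}_e\to\tau_{ij}$ for some $e$. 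To couple the two kinds of agents, I add \emph{consistency edges}: at the turn-$1$ level, $p^i_v\sim r^{ij}_e$ iff $v$ is the $V_i$-endpoint of $e$, and at the turn-$2$ level, $q^j_w\sim s^{ij}_e$ iff $w$ is the $V_j$-endpoint of $e$. Finally, I add ``spine'' edges $p^i_v\sim p^{i+1}_{v'}$ (for all $v,v'$) and symmetrically $q^i_v\sim q^{i+1}_{v'}$, together with some routine edges making the initial and final configurations $G$-connected.

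The verification proceeds as follows. A short case analysis confirms that under the tight budget of $3$ moves, every agent's trajectory is forced into the intended shape above, leaving only the choice of $v_i\in V_i$ for each $a_i$ and $e_{ij}$ between $V_i$ and $V_j$ for each $b_{ij}$. At turn $1$ the induced subgraph on the agent positions $\{p^i_{v_i}\}_{i\in[k]}\cup\{r^{ij}_{e_{ij}}\}_{i<j}$ consists of a spine path on the $p^i_{v_i}$'s with each $r^{ij}_{e_{ij}}$ attached to $p^i_{v_i}$ iff the $V_i$-endpoint of $e_{ij}$ equals $v_i$; hence connectivity at turn $1$ is equivalent to the $V_i$-endpoint matching for every pair $i<j$. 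The symmetric analysis at turn $2$ forces the $V_j$-endpoint of every $e_{ij}$ to equal $v_j$. Together these imply $e_{ij}=\{v_i,v_j\}\in E(H)$ for all $i<j$, i.e.\ that $\{v_1,\ldots,v_k\}$ is a multicolored clique of $H$; conversely, any such clique immediately yields a feasible schedule by selecting $e_{ij}=\{v_i,v_j\}$ and following the intended trajectories.

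The main obstacle is designing the consistency mechanism so that \emph{both} endpoints of every verifier edge are forced to match. Checking both endpoints at a single turn via connectivity alone is too weak, because the ``fully consistent'' induced subgraph is then the subdivision of $K_k$, whose many cycles absorb a single missing consistency edge without disconnecting. The key idea that makes the reduction go through is therefore to split the two endpoint checks across turns $1$ and $2$, with spine edges carrying the burden of selector-to-selector connectivity so that inconsistent $r^{ij}_{e_{ij}}$ or $s^{ij}_{e_{ij}}$ become isolated rather than merely under-connected. A secondary but routine task is to rule out, via short budget arguments using the fact that every start-target distance is exactly $3$, any alternative trajectory that might exploit the spine or consistency edges, together with verifying the no-swap and distinctness conditions.
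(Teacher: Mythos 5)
Your proposal is correct in its essentials and follows the same high-level template as the paper --- a reduction from \textsc{Multicolored Clique} with $\ell=3$ and $d=1$ in which every start--target distance is exactly $3$, so that the tight budget rigidifies every trajectory into a three-step layered walk and the $d=1$ connectivity constraint at turns $1$ and $2$ does the verification --- but your gadgetry is genuinely different. The paper uses $k(k-1)$ agents $\alpha^i_j$, each of which simultaneously selects a vertex of $S_i$ (turn $1$, inside a vertex gadget $V_i$) and certifies one endpoint of an edge (turn $2$, inside an edge gadget $E_{i,j}$); agreement among the $k-1$ agents of colour $i$ is forced by a connectivity argument inside $V_i$ (any connecting path between two distinct selection paths would have to pass through a start vertex $a^i_j$, contradicting the makespan), and edge existence is then enforced by the adjacency between the $V_i$ and $E_{i,j}$ gadgets. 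You instead separate the roles into $k$ selectors and $\binom{k}{2}$ verifiers and make consistency a pure isolation argument: a verifier whose chosen edge has the wrong $V_i$-endpoint becomes an isolated vertex of the turn-$1$ communication graph, and a wrong $V_j$-endpoint is caught symmetrically at turn $2$. Your observation that a single-turn check is too weak (the consistent configuration is a subdivision of $K_k$, which survives one missing attachment) and that the two endpoint checks must therefore be split across the two intermediate turns is exactly the right insight, and it is arguably cleaner than the paper's mechanism; both constructions use $\Theta(k^2)$ agents, so the parameter dependence is the same. The one place where your sketch leaves real work is the ``routine edges'' making the initial and final configurations $1$-connected: these must be added (e.g., cliques or spines on the starts and on the targets) without creating new length-$3$ start--target paths or new adjacencies into the layers occupied at turns $1$ and $2$; the paper faces the analogous issue with its target clique $Q$ joined to all edge-gadget vertices, so this is not a flaw of your route, but it does need to be spelled out before the trajectory-forcing case analysis is airtight.
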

\begin{proof}
The reduction is from the \textsc{$k$-Multicolored Clique} ($k$-MCC for short) problem. This problem takes as input a graph $H=(V,E)$, whose vertex set $V$ is partitioned into the $k$ independent sets $S_1,\dots,S_k$. The question is whether there exists a clique on $k$ vertices as a subgraph of $H$. Observe that if such a clique does exist, then it contains a unique vertex from $S_i$, for each $i\in [k]$. This problem was shown to be \W[1]-hard in~\cite{Pietrzak03}.

Starting from an input of the $k$-MCC problem, consisting of a graph $H$ whose vertex set is partitioned into sets $S_1,\dots,S_k$, we will construct an instance $I=\langle G,A,s_0,t,1,3\rangle$ of \MAPFCCShort such that $I$ is a yes-instance if and only if $H$ contains a clique on $k$ vertices.

\paragraph*{The construction of $G$.} First, we describe the two gadgets that will serve as the building blocks of $G$. For each $i\in[k]$, let $S_i=\{v^i_1,\dots,v^i_n\}$; we build the $V_i$ gadget as follows (see Fig.~\ref{fig:gadget1-hard-k-l-d}). We begin with $n$ paths with $k-1$ vertices each, each corresponding to a vertex of $S_i$. So, for each $p\in[n]$, we have the path $P_p^i=v_{p,1}^iv_{p,2}^i\dots v_{p,i-1}^iv_{p,i+1}^i\dots v_{p,k}^i$, which excludes the vertex $v_{p,i}^i$. We then add the new path $a^i_1\dots a^i_{i-1}a^i_{i+1}\dots a^i_k$ and, for each $p\in [n]$, we add the edge $a^i_jv_{p,j}^i$, for each $j\in[k]\setminus i$. We say that the vertices $v_{p,1}^i$ and $v_{p,k}^i$, for every $p\in[n]$ are the \emph{top} and \emph{bottom} vertices of the $V_i$ gadget, respectively (if $i=1$ or $i=k$ we adapt accordingly). Next, for each $l,m\in [k]$ with $l<m$, we build the $E_{l,m} $ gadget.
This gadget consists of a forest of edges, each corresponding to an edge between the vertices of $S_l$ and $S_m$ in $H$. That is, there exist vertices $u_p^{l,m}$ and $u_q^{m,l}$ in $E_{l,m}$ such that $u_p^{l,m}u_q^{m,l}\in E(E_{l,m})$ if and only if there exist vertices $v_p^l\in S_l$ and $v_q^m\in S_m$ such that $v_p^lv_q^m\in E(H)$.
We say that $u_p^{l,m}$ and $u_q^{m,l}$, for every $p,q\in[n]$, are the \emph{top} and \emph{bottom}, respectively, vertices of $E_{l,m}$. Note that $E_{l,m}$ contains at most $O(n^2)$ vertices. Moreover, since $l<m$, we have $\binom{k}{2}$ such gadgets in total. 
This finishes the construction of the two gadgets we use.

\begin{figure}[!t]
\centering

\begin{tikzpicture}[scale=0.6, inner sep=0.7mm]
    \node[draw, circle, line width=1pt, fill=white](a1) at (0,12)[label=above left: $a_1^i$] {};
    \node[draw, circle, line width=1pt, fill=white](a2) at (0,9)[label=above left: $a_2^i$] {};
    \node[draw, circle, line width=1pt, fill=white](a3) at (0,6)[label=above left: $a_{i-1}^i$] {};
    \node[draw, circle, line width=1pt, fill=white](a4) at (0,3)[label=above left: $a_{i+1}^i$] {};
    \node[draw, circle, line width=1pt, fill=white](a5) at (0,0)[label=above left: $a_k^i$] {};

    \node[draw, circle, line width=1pt, fill=gray!50](v11) at (3,12)[label=right: $v_{1,1}^i$] {};
    \node[draw, circle, line width=1pt, fill=white](v21) at (3,9)[label=right: $v_{1,2}^i$] {};
    \node[draw, circle, line width=1pt, fill=white](v31) at (3,6)[label=right: $v_{1,i-1}^i$] {};
    \node[draw, circle, line width=1pt, fill=white](v41) at (3,3)[label=right: $v_{1,i+1}^i$] {};
    \node[draw, circle, line width=1pt, fill=black](v51) at (3,0)[label=right: $v_{1,k}^i$] {};

    \node[draw, circle, line width=1pt, fill=gray!50](v12) at (6,12)[label=right: $v_{2,1}^i$] {};
    \node[draw, circle, line width=1pt, fill=white](v22) at (6,9)[label=right: $v_{2,2}^i$] {};
    \node[draw, circle, line width=1pt, fill=white](v32) at (6,6)[label=right: $v_{2,i-1}^i$] {};
    \node[draw, circle, line width=1pt, fill=white](v42) at (6,3)[label=right: $v_{2,i+1}^i$] {};
    \node[draw, circle, line width=1pt, fill=black](v52) at (6,0)[label=right: $v_{2,k}^i$] {};

    \node[draw, circle, line width=1pt, fill=gray!50](v13) at (11,12)[label=right: $v_{n,1}^i$] {};
    \node[draw, circle, line width=1pt, fill=white](v23) at (11,9)[label=right: $v_{n,2}^i$] {};
    \node[draw, circle, line width=1pt, fill=white](v33) at (11,6)[label=right: $v_{n,i-1}^i$] {};
    \node[draw, circle, line width=1pt, fill=white](v43) at (11,3)[label=right: $v_{n,i+1}^i$] {};
    \node[draw, circle, line width=1pt, fill=black](v53) at (11,0)[label=right: $v_{n,k}^i$] {};

    \draw[middle dotted line] (v21) -- (v31);
    \draw[middle dotted line] (v41) -- (v51);
    \draw[middle dotted line] (v22) -- (v32);
    \draw[middle dotted line] (v42) -- (v52);
    \draw[middle dotted line] (v23) -- (v33);
    \draw[middle dotted line] (v43) -- (v53);

    \path (v32) -- (v33) node [black, midway, sloped] {$\dots$};
    \draw[middle dotted line] (a2) -- (a3);
    \draw[middle dotted line] (a4) -- (a5);

	\draw[-, line width=0.5pt]  (a1) -- (v11);
	\draw[-, line width=0.5pt]  (a2) -- (v21);
	\draw[-, line width=0.5pt]  (a3) -- (v31);
	\draw[-, line width=0.5pt]  (a4) -- (v41);
	\draw[-, line width=0.5pt]  (a5) -- (v51);

	\draw[-, line width=0.5pt]  (v11) -- (v21);
	\draw[-, line width=0.5pt]  (v31) -- (v41);
	\draw[-, line width=0.5pt]  (v12) -- (v22);
	\draw[-, line width=0.5pt]  (v32) -- (v42);
	\draw[-, line width=0.5pt]  (v13) -- (v23);
	\draw[-, line width=0.5pt]  (v33) -- (v43);

	\draw[-, line width=0.5pt]  (a1) -- (a2);
	\draw[-, line width=0.5pt]  (a3) -- (a4);

	\draw (a1) edge[bend left=20] (v12);
	\draw (a1) edge[bend left=20] (v13);
	\draw (a2) edge[bend left=20] (v22);
	\draw (a2) edge[bend left=20] (v23);
	\draw (a3) edge[bend left=20] (v32);
	\draw (a3) edge[bend left=20] (v33);
	\draw (a4) edge[bend left=20] (v42);
	\draw (a4) edge[bend left=20] (v43);
	\draw (a5) edge[bend left=20] (v52);
	\draw (a5) edge[bend left=20] (v53);
\end{tikzpicture}
\caption{The $V_i$ gadget, for any $i\in [k]$, used in the proof of Theorem~\ref{thm:hard-k-l-d}. The color gray (black resp.) is used to represent the top (bottom resp.) vertices of the gadget.}\label{fig:gadget1-hard-k-l-d}
\end{figure}
We are now ready to construct the graph $G$. We start with a copy of the $V_i$ gadget for each $i\in[k]$ and a copy of the $E_{l,m}$ gadget for each $l,m\in[k]$ with $l<m$. For each $i\in [k-1]$, we add all the edges between the bottom vertices of $V_{i}$ and the top vertices of $V_{i+1}$, as well as the edge $a_k^ia_1^{i+1}$. Then, we connect the $V_i$ gadgets with the $E_{l,m}$ gadgets as follows (illustrated in Fig.~\ref{fig:connection-gadgets-hard-k-l-d}). For every $l,m\in[k]$ with $l<m$ and $p,q\in[n]$, for every $u^{l,m}_p\in E_{l,m}$ and $u^{m,l}_q\in E_{l,m}$, we connect $u^{l,m}_p$ with $v_{p,m}^l$ and $u^{m,l}_q$ with $v_{q,l}^m$. Next, for every $l,m\in[k-1]$ with $l<m$, we add all the edges between the bottom vertices of $E_{l,m}$ and the top vertices of $E_{l,m+1}$ as well as all the edges between the bottom vertices of $E_{l,k}$ and the top vertices of $E_{l+1,l+2}$.
Then we add the clique $Q$ with the $k(k-1)$ vertices $\{t_j^i : i\in [k] \text{ and }  j\in [k]\setminus \{i\}\}$, and we connect all the vertices of every $E_{l,m}$ gadget to all the vertices of~$Q$.
\begin{figure}[!t]
\centering

\begin{tikzpicture}[scale=0.4, inner sep=0.4mm]

    \node[draw, circle, line width=1pt, fill=white](a11) at (0,0)[] {};
    \node[draw, circle, line width=1pt, fill=white](v11) at (3,0)[] {};
    \node[draw, circle, line width=1pt, fill=white](v12) at (6,0)[] {};
    \node[draw, circle, line width=1pt, fill=white](v13) at (9,0)[] {};
    \node[draw, circle, line width=1pt, fill=white](v14) at (12,0)[] {};
    \node[draw, circle, line width=1pt, fill=white](a12) at (0,-2)[] {};
    \node[draw, circle, line width=1pt, fill=white](v15) at (3,-2)[] {};
    \node[draw, circle, line width=1pt, fill=white](v16) at (6,-2)[] {};
    \node[draw, circle, line width=1pt, fill=white](v17) at (9,-2)[] {};
    \node[draw, circle, line width=1pt, fill=white](v18) at (12,-2)[] {};
    \node[draw, circle, line width=1pt, fill=white](a13) at (0,-6)[] {};
    \node[draw, circle, line width=1pt, fill=white](v19) at (3,-6)[] {};
    \node[draw, circle, line width=1pt, fill=white](v110) at (6,-6)[label=left: $v^1_{2,k}$] {};
    \node[draw, circle, line width=1pt, fill=white](v111) at (9,-6)[] {};
    \node[draw, circle, line width=1pt, fill=white](v112) at (12,-6)[] {};

    \path (v13) -- (v14) node [black, midway, sloped] {$\dots$};
    \path (v17) -- (v18) node [black, midway, sloped] {$\dots$};
    \path (v111) -- (v112) node [black, midway, sloped] {$\dots$};
    \draw[middle dotted line] (a12) -- (a13);
    \draw[middle dotted line] (v15) -- (v19);
    \draw[middle dotted line] (v16) -- (v110);
    \draw[middle dotted line] (v17) -- (v111);
    \draw[middle dotted line] (v18) -- (v112);

    \draw[-, line width=0.5pt]  (v11) -- (v15);
    \draw[-, line width=0.5pt]  (v12) -- (v16);
    \draw[-, line width=0.5pt]  (v13) -- (v17);
    \draw[-, line width=0.5pt]  (v14) -- (v18);
    \draw[-, line width=0.5pt]  (a11) -- (a12);

	\draw[-, line width=0.5pt,gray!30] (a11) edge[bend left=20] (v11);
	\draw[-, line width=0.5pt,gray!30] (a11) edge[bend left=20] (v12);
	\draw[-, line width=0.5pt,gray!30] (a11) edge[bend left=20] (v11);
	\draw[-, line width=0.5pt,gray!30] (a11) edge[bend left=20] (v13);
	\draw[-, line width=0.5pt,gray!30] (a11) edge[bend left=20] (v14);

	\draw[-, line width=0.5pt,gray!30] (a12) edge[bend left=20] (v15);
	\draw[-, line width=0.5pt,gray!30] (a12) edge[bend left=20] (v16);
	\draw[-, line width=0.5pt,gray!30] (a12) edge[bend left=20] (v17);
	\draw[-, line width=0.5pt,gray!30] (a12) edge[bend left=20] (v18);

	\draw[-, line width=0.5pt,gray!30] (a13) edge[bend left=20] (v19);
	\draw[-, line width=0.5pt,gray!30] (a13) edge[bend left=20] (v110);
	\draw[-, line width=0.5pt,gray!30] (a13) edge[bend left=20] (v111);
	\draw[-, line width=0.5pt,gray!30] (a13) edge[bend left=20] (v112);
	\node[draw, circle, line width=1pt, fill=white]() at (6,-6)[label=left: $v^1_{2,k}$] {};

    \draw[-, line width=0.5pt,gray!50] (-1,1.5) -- (13,1.5);
    \draw[-, line width=0.5pt,gray!50] (-1,1.5) -- (-1,-7);
    \draw[-, line width=0.5pt,gray!50] (13,1.5) -- (13,-7);
    \draw[-, line width=0.5pt,gray!50] (-1,-7) -- (13,-7);
    \node at (-0.5,2.1) {$V_1$};

    \node[draw, circle, line width=1pt, fill=white](a31) at (0,-11)[] {};
    \node[draw, circle, line width=1pt, fill=white](v31) at (3,-11)[label=left: $v^k_{1,1}$] {};
    \node[draw, circle, line width=1pt, fill=white](v32) at (6,-11)[] {};
    \node[draw, circle, line width=1pt, fill=white](v33) at (9,-11)[] {};
    \node[draw, circle, line width=1pt, fill=white](v34) at (12,-11)[] {};
    \node[draw, circle, line width=1pt, fill=white](a32) at (0,-13)[] {};
    \node[draw, circle, line width=1pt, fill=white](v35) at (3,-13)[] {};
    \node[draw, circle, line width=1pt, fill=white](v36) at (6,-13)[] {};
    \node[draw, circle, line width=1pt, fill=white](v37) at (9,-13)[] {};
    \node[draw, circle, line width=1pt, fill=white](v38) at (12,-13)[] {};
    \node[draw, circle, line width=1pt, fill=white](a33) at (0,-17)[] {};
    \node[draw, circle, line width=1pt, fill=white](v39) at (3,-17)[] {};
    \node[draw, circle, line width=1pt, fill=white](v310) at (6,-17)[] {};
    \node[draw, circle, line width=1pt, fill=white](v311) at (9,-17)[] {};
    \node[draw, circle, line width=1pt, fill=white](v312) at (12,-17)[] {};

    \path (v33) -- (v34) node [black, midway, sloped] {$\dots$};
    \path (v37) -- (v38) node [black, midway, sloped] {$\dots$};
    \path (v311) -- (v312) node [black, midway, sloped] {$\dots$};
    \draw[middle dotted line] (a32) -- (a33);
    \draw[middle dotted line] (v35) -- (v39);
    \draw[middle dotted line] (v36) -- (v310);
    \draw[middle dotted line] (v37) -- (v311);
    \draw[middle dotted line] (v38) -- (v312);

    \draw[-, line width=0.5pt]  (v31) -- (v35);
    \draw[-, line width=0.5pt]  (v32) -- (v36);
    \draw[-, line width=0.5pt]  (v33) -- (v37);
    \draw[-, line width=0.5pt]  (v34) -- (v38);
    \draw[-, line width=0.5pt]  (a31) -- (a32);

	\draw[-, line width=0.5pt,gray!30] (a31) edge[bend left=20] (v31);
	\draw[-, line width=0.5pt,gray!30] (a31) edge[bend left=20] (v32);
	\draw[-, line width=0.5pt,gray!30] (a31) edge[bend left=20] (v31);
	\draw[-, line width=0.5pt,gray!30] (a31) edge[bend left=20] (v33);
	\draw[-, line width=0.5pt,gray!30] (a31) edge[bend left=20] (v34);

	\draw[-, line width=0.5pt,gray!30] (a32) edge[bend left=20] (v35);
	\draw[-, line width=0.5pt,gray!30] (a32) edge[bend left=20] (v36);
	\draw[-, line width=0.5pt,gray!30] (a32) edge[bend left=20] (v37);
	\draw[-, line width=0.5pt,gray!30] (a32) edge[bend left=20] (v38);

	\draw[-, line width=0.5pt,gray!30] (a33) edge[bend left=20] (v39);
	\draw[-, line width=0.5pt,gray!30] (a33) edge[bend left=20] (v310);
	\draw[-, line width=0.5pt,gray!30] (a33) edge[bend left=20] (v311);
	\draw[-, line width=0.5pt,gray!30] (a33) edge[bend left=20] (v312);
	\node[draw, circle, line width=1pt, fill=white]() at (3,-11)[label=left: $v^k_{1,1}$] {};

	\path (6,-6) -- (6,-11) node [black, midway, sloped] {$\dots$};

    \draw[-, line width=0.5pt,gray!50] (-1,-9.5) -- (13,-9.5);
    \draw[-, line width=0.5pt,gray!50] (-1,-9.5) -- (-1,-18);
    \draw[-, line width=0.5pt,gray!50] (13,-9.5) -- (13,-18);
    \draw[-, line width=0.5pt,gray!50] (-1,-18) -- (13,-18);
    \node at (-0.5,-8.9) {$V_k$};

    \node[draw, circle, line width=1pt, fill=white](u1k1) at (18,-7.5)[] {};
    \node[draw, circle, line width=1pt, fill=white](u1k2) at (20,-7.5)[label=right: $u^{1,k}_2$] {};
    \node[draw, circle, line width=1pt, fill=white](u1k3) at (24,-7.5)[] {};
    \node[draw, circle, line width=1pt, fill=white](uk11) at (18,-9.5)[] {};
    \node[draw, circle, line width=1pt, fill=white](uk12) at (20,-9.5)[label=right: $u^{k,1}_1$] {};
    \node[draw, circle, line width=1pt, fill=white](uk13) at (24,-9.5)[] {};

    \draw[-, line width=0.5pt]  (u1k1) -- (uk11);
    \draw[-, line width=0.5pt]  (u1k2) -- (uk12);
    \draw[-, line width=0.5pt]  (u1k3) -- (uk13);

    \path (20,-8.5) -- (24,-8.5) node [black, midway, sloped] {$\dots$};


    \draw[-, line width=0.5pt,gray!50] (17,-6.5) -- (25,-6.5);
    \draw[-, line width=0.5pt,gray!50] (25,-6.5) -- (25,-10.5);
    \draw[-, line width=0.5pt,gray!50] (25,-10.5) -- (17,-10.5);
    \draw[-, line width=0.5pt,gray!50] (17,-10.5) -- (17,-6.5);
    \node at (26.5,-8.5) {$E_{1,k}$};


    \draw[-, line width=0.5pt] (u1k2) edge[bend left=20] (v110);
    \draw[-, line width=0.5pt] (v31) edge[bend right=15] (uk12);

    \draw [dash pattern=on 2pt off 2pt] (v19)-- (3,-7.5);
    \draw [dash pattern=on 2pt off 2pt] (v19)-- (3.5,-7.5);
    \draw [dash pattern=on 2pt off 2pt] (v19)-- (4,-7.5);

    \draw [dash pattern=on 2pt off 2pt] (v110)-- (5.5,-7.5);
    \draw [dash pattern=on 2pt off 2pt] (v110)-- (6,-7.5);
    \draw [dash pattern=on 2pt off 2pt] (v110)-- (6.5,-7.5);

    \draw [dash pattern=on 2pt off 2pt] (v111)-- (8.5,-7.5);
    \draw [dash pattern=on 2pt off 2pt] (v111)-- (9,-7.5);
    \draw [dash pattern=on 2pt off 2pt] (v111)-- (9.5,-7.5);

    \draw [dash pattern=on 2pt off 2pt] (v112)-- (11,-7.5);
    \draw [dash pattern=on 2pt off 2pt] (v112)-- (11.5,-7.5);
    \draw [dash pattern=on 2pt off 2pt] (v112)-- (12,-7.5);

    \draw [dash pattern=on 2pt off 2pt] (a13)-- (0,-7.5);

    \draw [dash pattern=on 2pt off 2pt] (v31)-- (3,-9.5);
    \draw [dash pattern=on 2pt off 2pt] (v31)-- (3.5,-9.5);
    \draw [dash pattern=on 2pt off 2pt] (v31)-- (4,-9.5);

    \draw [dash pattern=on 2pt off 2pt] (v32)-- (5.5,-9.5);
    \draw [dash pattern=on 2pt off 2pt] (v32)-- (6,-9.5);
    \draw [dash pattern=on 2pt off 2pt] (v32)-- (6.5,-9.5);

    \draw [dash pattern=on 2pt off 2pt] (v33)-- (8.5,-9.5);
    \draw [dash pattern=on 2pt off 2pt] (v33)-- (9,-9.5);
    \draw [dash pattern=on 2pt off 2pt] (v33)-- (9.5,-9.5);

    \draw [dash pattern=on 2pt off 2pt] (v34)-- (11,-9.5);
    \draw [dash pattern=on 2pt off 2pt] (v34)-- (11.5,-9.5);
    \draw [dash pattern=on 2pt off 2pt] (v34)-- (12,-9.5);

    \draw [dash pattern=on 2pt off 2pt] (a31)-- (0,-9.5);

    \draw [dash pattern=on 2pt off 2pt] (u1k1)-- (18,-6);
    \draw [dash pattern=on 2pt off 2pt] (u1k1)-- (18.5,-6);
    \draw [dash pattern=on 2pt off 2pt] (u1k1)-- (19,-6);

    \draw [dash pattern=on 2pt off 2pt] (u1k2)-- (19.5,-6);
    \draw [dash pattern=on 2pt off 2pt] (u1k2)-- (20,-6);
    \draw [dash pattern=on 2pt off 2pt] (u1k2)-- (20.5,-6);

    \draw [dash pattern=on 2pt off 2pt] (u1k3)-- (23,-6);
    \draw [dash pattern=on 2pt off 2pt] (u1k3)-- (23.5,-6);
    \draw [dash pattern=on 2pt off 2pt] (u1k3)-- (24,-6);

    \draw [dash pattern=on 2pt off 2pt] (uk11)-- (18,-11);
    \draw [dash pattern=on 2pt off 2pt] (uk11)-- (18.5,-11);
    \draw [dash pattern=on 2pt off 2pt] (uk11)-- (19,-11);

    \draw [dash pattern=on 2pt off 2pt] (uk12)-- (19.5,-11);
    \draw [dash pattern=on 2pt off 2pt] (uk12)-- (20,-11);
    \draw [dash pattern=on 2pt off 2pt] (uk12)-- (20.5,-11);

    \draw [dash pattern=on 2pt off 2pt] (uk13)-- (23,-11);
    \draw [dash pattern=on 2pt off 2pt] (uk13)-- (23.5,-11);
    \draw [dash pattern=on 2pt off 2pt] (uk13)-- (24,-11);

\end{tikzpicture}
\caption{An example of the connection between the $V_i$ and the $E_{l,m}$ gadgets used in the proof of Theorem~\ref{thm:hard-k-l-d}. The edge $u_2^{1,k}u_1^{k,1}$ of $E_{1,k}$ represents the edge $v_2^1v_1^k$ of $H$, where $v_2^1\in S_1$ and $v_1^k\in S_k$. The dotted edges are used to represent the edges connecting the vertices between the corresponding gadgets.}\label{fig:connection-gadgets-hard-k-l-d}
\end{figure}
To finalize the construction of the instance $I$, we need to specify the set of agents, as well as the functions $s_0$ and $t$. For the set of agents, let $A=\left\{\alpha_j^i:i\in[k] \text{ and } j\in [k]\setminus \{i \} \right\}$. Then, for any $i\in[k]$ and $j\in[k]\setminus \{i\}$, let $s_0(\alpha_j^i)=a_j^i$ and $t(\alpha_j^i)=t^i_j$. This finishes the construction of the instance $I$.

Before we move on with the reduction, we present some important observations. Observe first that the starting position of each agent is at a distance exactly $3$ from their target position.
Since in $I$ we have $\ell=3$, it follows that any feasible solution of $I$ will have makespan exactly $3$ and will be such that $s_1(\alpha_j^i)\in P_p^i$, for some $p\in [n]$, $s_2(\alpha_j^i)\in E_{l,m}$, for some $l,m\in [k]$ with $l<m$, and $s_3(\alpha_j^i) = t^i_j$, for every $i\in[k]$ and $j\in[k]\setminus \{i\}$.
Also, observe that in any feasible solution of $I$, we have that for every $i\in[k]$ there exists a unique $p\in [n]$ such that $s_1(\alpha_j^i)\in P_p^i$ for every $j\in[k]\setminus \{i\}$.
In fact, assume that $s_1$ is such that there exist $j< j'\in[k]\setminus \{i\}$ and $p<p'\in [n]$ with $v_1=s_1(\alpha_j^i)\in P_p^i$ and $v_2=s_1(\alpha_{j'}^i)\in P_{p'}^i$.
Then $\dist_G(v_1,v_2)\geq 2$. Since $d=1$ and $I$ is assumed to be feasible, we have that the positions of the agents during $s_1$ induce a path of $G$ that connects $v_1$ and $v_2$.
By $s_0$ and the construction of $G$, this path must necessarily include a vertex $a_j^i$ for some $j\in[k]\setminus \{i\}$; that is, there exists an $\alpha \in A$ such that $s_1(\alpha)=a_j^i$.
This is a contradiction to the makespan of the solution.

\paragraph*{The reduction.} We start by assuming that $I$ is a yes-instance of \MAPFCCShort and let $s_1,s_2,s_3$ be a feasible solution of $I$. It follows from the previous observations that for every $i\in[k]$ there exists a unique $p\in [n]$ such that $s_1(\alpha_j^i)\in P_p^i$ for every $j\in[k]\setminus \{i\}$; let us denote this $p$ by $p(i)$. Consider now the vertices $v_{p(i)}^i$, for every $i\in[k]$, of $H$. We claim that the subgraph of $H$ that is induced by these vertices is a clique on $k$ vertices. Indeed, assume that there are two indices $i<i'\in [k]$ such that $v_{p(i)}^iv_{p(i')}^{i'}\notin E(H)$. Consider now the vertices $v_{p(i),p(i')}^i$ and $v_{p(i'),p(i)}^{i'}$ of $V_i$ and $V_{i'}$ respectively. By the definition of $p(i)$ and $p(i')$, we have that both of these vertices are occupied by an agent, say $\alpha$ and $\beta$ respectively, during the turn $s_1$. Since $s_1,s_2,s_3$ is a feasible solution of $I$, and by the construction of $G$, we have that $s_2(\alpha)$ and $s_2(\beta)$ belong to $E_{i,i'}$. In particular, we have that $s_2(\alpha)=u^{i,p(i')}_{p(i)}$ and $s_2(\beta)=u^{i',p(i)}_{p(i')}$. This is a contradiction since, by its construction, the $E_{i,i'}$ gadget does not include these vertices since $v_{p(i)}^iv_{p(i')}^{i'}\notin E(H)$.

For the reverse direction, we assume that $H$ is a yes-instance of $k$-MCC. That is, for each $i\in[k]$, there exists a $p(i)\in [n]$ such that the vertices $\{v^i_{p(i)}:i\in[k]\}$ form a clique of $H$. For each $i\in[k]$ and $j\in[k]\setminus \{i\}$:
\begin{enumerate}
 \item We set $s_1(\alpha_j^i)=v^i_{p(i),j}$. Clearly, $s_0(\alpha_j^i)$ is a neighbor of $s_{1}(\alpha_j^i)$. Moreover, since $v^i_{p(i),k}$ is connected to $v^{i+1}_{p(i+1),1}$ for every $i\in [k-1]$, we have that the communication constraint is respected within $s_1$.
 \item Next, we set $s_2(\alpha_j^i)=u^{i,j}_{p(i)}$. Observe that for each $i\in[k]$ and $j\in[k]\setminus \{i\}$, we have that $v^i_{p(i),j}$ is a neighbor of $u^{i,j}_{p(i)}$ by the construction of $G$ and because the vertices $\{v^i_{p(i)}:i\in[k]\}$ form a clique of $H$. By the same arguments, and by the connectivity between the $E_{l,m}$ gadgets, we have that the connectivity constraint is also respected within $s_2$.
 \item Finally, we set $s_3(\alpha_j^i)=t_j^i$. It is trivial to check that $s_3(\alpha_j^i)$ is a neighbor of $s_2(\alpha_j^i)$ and that the connectivity constraint is respected within $s_3$.
\end{enumerate}
It follows that $s_1,s_2,s_3$ is a feasible solution of $I$. This completes the proof.
\end{proof}

\section{Efficient Algorithms}
In this section, we present our FPT algorithms that solve the \MAPFCCShort problem.

\subsection{Few Agents and Short Communication}
\begin{theorem}\label{thm:fpt-d-k-Delta}
  The \MAPFCC{} problem is in \FPT{} parameterized by the number of agents~$k$ plus the maximum degree $\Delta$ and the communication range $d$.
\end{theorem}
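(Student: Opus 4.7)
The plan is to construct the configuration graph of the instance explicitly, show that its size is polynomial in $|V(G)|$ with an FPT dependence on $k+\Delta+d$, and then solve the instance by a single breadth-first search on it. The whole argument hinges on a structural observation about $d$-connected sets that localises all agents around a single ``anchor''.

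First, I would establish the diameter bound. For any turn $i$ of any feasible schedule, the set $W_i=\{s_i(a)\mid a\in A\}$ is $d$-connected by definition, so the induced subgraph $D[W_i]$ of the communication graph is connected on at most $k$ vertices. Hence every pair $u,v\in W_i$ satisfies $\dist_D(u,v)\leq k-1$, and since each edge of $D$ corresponds to a walk in $G$ of length at most $d$, this translates into $\dist_G(u,v)\leq d(k-1)$. Fixing any designated anchor agent $a^\star\in A$, every other agent at turn $i$ must therefore lie inside the ball of radius $d(k-1)$ around $s_i(a^\star)$ in $G$. Because $G$ has maximum degree $\Delta$, this ball contains at most $h(k,\Delta,d):=\sum_{j=0}^{d(k-1)}\Delta^j$ vertices, a quantity depending only on the parameters. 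Consequently, the number of injective maps $s\colon A\to V(G)$ whose image satisfies the communication constraint~(\ref{itm:schedule:d-connected}) is at most $|V(G)|\cdot h(k,\Delta,d)^{k-1}$, which is polynomial in $|V(G)|$ and FPT in $k+\Delta+d$.

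I would then assemble these configurations as the vertices of a directed graph $\mathcal{C}$ and add an arc from one configuration to another whenever they are related by a valid one-turn transition: each agent either stays or moves to a $G$-neighbour, subject to conditions (\ref{itm:schedule:neighbor})--(\ref{itm:schedule:d-connected}) and the no-swap rule. Each configuration has at most $(\Delta+1)^k$ candidate successors, so $\mathcal{C}$ can be built in the same asymptotic time as its size. The instance is a yes-instance precisely when there is a directed path of length at most $\ell$ from the vertex encoding $s_0$ to the vertex encoding $t$, which is decided by a single BFS from $s_0$. Note that BFS visits each vertex of $\mathcal{C}$ at most once irrespective of the magnitude of $\ell$, so the total running time is $f(k,\Delta,d)\cdot\operatorname{poly}(|V(G)|)$ for an explicit computable $f$.

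The main obstacle I anticipate is the bookkeeping: efficiently verifying $d$-connectedness both when enumerating configurations and when generating successors, and correctly enforcing the no-swap rule. I would precompute, for every vertex of $G$, its BFS tree up to depth $d$, so that a single distance query in $D$ takes $O(1)$ time and a $d$-connectedness test of a $k$-vertex set takes time polynomial in $k$; the no-swap rule is purely local and can be checked agent-by-agent while iterating over successors. Beyond these routine issues, the diameter bound from the first step is essentially the whole mathematical content of the argument.
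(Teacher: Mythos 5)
Your proposal is correct and follows the same overall strategy as the paper: build the configuration graph (one node per feasible $d$-connected placement of the $k$ agents, arcs for valid one-turn transitions), bound its size by a function of $k,\Delta,d$ times $|V(G)|$, and decide the instance by a BFS testing whether the target configuration is within distance $\ell$ of the initial one. The only real difference is the counting step. The paper observes that all agents must lie in a connected vertex set $U$ of size $kd$ and invokes a known enumeration result (at most $\Delta^{O(dk)}$ connected sets of size $dk$ through a fixed vertex), then multiplies by the number of placements inside $U$, yielding $|V(H)|\le \Delta^{O(dk)}kd^k n$. You instead derive a diameter bound ($\dist_G(u,v)\le d(k-1)$ for any two occupied vertices, since $D[W_i]$ is connected on $\le k$ vertices and each $D$-edge spans $G$-distance $\le d$) and place all agents in a ball of radius $d(k-1)$ around an anchor, giving roughly $n\cdot\Delta^{O(dk^2)}$ configurations. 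Your bound is quantitatively a bit weaker (exponent $dk^2$ rather than $dk$) but is more elementary and self-contained, needing no citation; both suffice for the \FPT{} claim, and the rest of your argument (transition generation, no-swap check, BFS) matches the paper's.
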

\begin{proof}
Let $\langle G, A, s_0, t,d, \ell\rangle$ be an instance of \MAPFCC.
The algorithm is as follows.
We build an auxiliary directed graph $H$ which has a vertex $u$ for every possible arrangement of the $k$ agents of $A$ into feasible ($d$-connected) positions.
Observe that $V(H)$ contains the vertices $u_s$ and $u_t$ which correspond to the initial and the final configurations of the agents on the vertices of $G$, respectively.
Two vertices $u_1,u_2\in V(H)$ are joined by the arc $(u_1,u_2)$ if and only if it is possible to move from the configuration represented by the vertex $u_1$ into the one represented by $u_2$ in one turn.
Clearly, $\langle G, A, s_0, t,d, \ell\rangle$ is a yes-instance of \MAPFCC if and only if there is a directed path in $H$ from $u_s$ to $u_t$, of length at most $\ell$.
That is, if $\dist_H(u_s,u_t) \le \ell$; which one can check, e.g., using BFS.
We will show that
$|V(H)|\le \Delta^{O(dk)} kd^k n$, which suffices to prove the statement.

Let us consider an agent $a\in A$.
We will first count the different feasible arrangements of the $k$ agents of $A$ (i.e., the possible positions of the agents on the graph) such that $a$ is located at a vertex $u\in V(G)$.
Since two agents are considered connected if they are in distance at most $d$ and we have $k$ agents, there must be a set $U$ of $kd$ vertices such that the induced subgraph $G[U]$ is connected and all agents are located in $U$.
It is known (see \cite[Proposition~5.1]{GuillemotM14}) that given $u$, there exist at most $\Delta^{O(dk)}$ sets $U$ of size $dk$ where $u \in U$ and $G[U]$ is connected.
We can also enumerate all such sets $U$ in $\Delta^{O(dk)}$ time.
Finally, we need to know the exact positions of the agents in $U$.
Since the possible arrangement of the agents in $U$ are ${{kd} \choose k} k! \le kd^k$, we have that $|V(H)|\le \Delta^{O(dk)} kd^k n$.
\end{proof}

\begin{theorem}\label{thm:fpt-d-k-tree}
  The \MAPFCC{} problem is in \FPT{} parameterized by the number of agents~$k$ plus the communication range $d$ when the input graph is a tree.
\end{theorem}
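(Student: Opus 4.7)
The plan is to reduce the given tree instance on $T$ to an equivalent instance on a subtree $T^\star \subseteq T$ whose maximum degree is bounded by a function of $k$ and $d$, and then to invoke Theorem~\ref{thm:fpt-d-k-Delta}. The structural fact driving the reduction is that since at every turn the $k$ agents form a $d$-connected set in $G$, they always lie within a subtree of diameter at most $kd$. Hence any ``detour'' or ``parking'' performed by an agent can only extend to bounded depth into a branch and can only involve a bounded number of agents at once.

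To construct $T^\star$, first let $R = \{s_0(a) : a \in A\} \cup \{t(a) : a \in A\}$ be the set of at most $2k$ terminals and let $T_R \subseteq T$ be their Steiner tree, which has at most $2k$ leaves and hence at most $2k-2$ vertices of degree~$\geq 3$. For every vertex $v \in V(T_R)$, we attach at most $k$ additional ``parking'' branches of $T$ incident to $v$ (connected components of $T - v$ adjacent to $v$), each itself recursively pruned so that it has maximum degree at most $k+1$ and depth at most $kd$. This produces a tree $T^\star$ of maximum degree bounded by a function of $k$, of size polynomial in $|V(T)|$.

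The main obstacle is to show that the \MAPFCCShort instance on $T^\star$ has the same answer as the instance on $T$. One direction is immediate: since $T^\star$ is a connected subtree of $T$, distances between vertices of $T^\star$ agree in $T^\star$ and in $T$, so any feasible solution on $T^\star$ lifts verbatim to a feasible solution on $T$. The converse direction---projecting a feasible solution on $T$ down to one confined to $T^\star$---requires a careful rerouting argument, in which any agent that enters a pruned branch is replaced by an agent placed in a retained branch of the same ``shape''. Here the two parameters enter crucially: at most $k$ agents can simultaneously occupy the branches at any one vertex, so the $k$ retained branches per vertex suffice for parking; and since any chain of agents serving as a communication relay deep in a branch consists of at most $k$ agents spaced at pairwise distance at most $d$, such a chain extends at most $kd$ into the branch, so the depth cutoff at $kd$ loses no expressive power. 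This rerouting-and-projection lemma is the step I expect to be the most technical part of the proof, requiring one to classify branches into a bounded number of ``types'' so that pruned and retained branches of the same type can be safely swapped.

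Once this equivalence is established, we have an instance on $T^\star$ with $\Delta(T^\star)$ bounded by a function of $k$ alone, and we solve it by directly applying Theorem~\ref{thm:fpt-d-k-Delta} with parameters $k$, $d$, and $\Delta(T^\star)$, giving a running time that is \FPT{} in $k + d$.
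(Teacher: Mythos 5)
Your overall strategy---prune the tree around the terminal-connecting structure down to maximum degree $O(k)$ while keeping a bounded number of ``parking'' spots, then invoke Theorem~\ref{thm:fpt-d-k-Delta}---is exactly the approach the paper takes. However, the entire content of the paper's proof lives in the step you explicitly defer: the ``rerouting-and-projection lemma'' showing that a feasible schedule on $T$ can be converted into one on the pruned tree. You state that this is ``the most technical part'' and sketch only that branches should be classified into types; that is not a proof, and the statement does not follow without it. Concretely, the reassignment of parked agents must simultaneously preserve (i) turn-to-turn adjacency, (ii) injectivity of each $s_i$, and (iii) $d$-connectivity of the occupied set at every turn. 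The paper handles (i) and (ii) by giving each agent $a_j$ its \emph{own} dedicated pendant leaf $v_j$ attached to the high-degree vertex $u$ (so parked agents never collide and never need to move), which also shows your depth-$kd$ retained branches are unnecessary---depth~$1$ suffices. Point (iii) is where the real difficulty sits and where your sketch is silent: after reassignment, two parked agents sit at distance~$2$ through $u$, which is fine for $d\ge 2$ but fails for $d=1$ unless some agent occupies $u$ itself at that turn. The paper needs a separate argument for $d=1$, assuming the makespan is optimal and deriving a contradiction (a shorter schedule) when $u$ is unoccupied while agents are parked around it. Your proposal gives no mechanism to handle this case.

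A secondary, more minor issue: your justification for the depth-$kd$ cutoff (``a relay chain extends at most $kd$ into the branch'') only bounds the depth of an agent relative to an agent anchored at the branch's root; if all $k$ agents descend into a terminal-free branch together, nothing in your argument prevents them from going arbitrarily deep, and ruling this out again requires a surgery/optimality argument rather than the static diameter bound you invoke. None of these obstacles is fatal to the approach---the paper overcomes all of them---but as written your proposal asserts the key equivalence rather than establishing it.
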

\begin{proof}
Let $I=\langle T, A, s_0, t,d, \ell\rangle$ be an instance of \MAPFCC, where $T$ is a tree.
The idea here is to create a new instance $I'=\langle T', A, s_0, t,d, \ell\rangle$ such that $I$ is a yes-instance if and only if the same is true for $I'$, where $T'$ is a tree of maximum degree $3k$.
In essence, we will prune the tree $T$ so that its maximum degree becomes bounded by $3k$.
Furthermore, we show that, given a valid schedule for~$I$, we can adapt the movements of the agents appropriately to obtain a valid schedule for~$I'$ of the same length.
Once this is done, it suffices to apply the algorithm provided in Theorem~\ref{thm:fpt-d-k-Delta} to decide whether $I'$ is a yes-instance or not.

If $\Delta(T)\leq 3k$, then we are done.
So, let $u\in V(T)$ be such that $d_T(u)>3k$.
For each $j\in [k]$, we define $P^-_{j,u}$ as the simple path of $T$ that connects $s_0(a_j)$ to $u$.
Similarly, let $P^+_{j,u}$ be the simple path of $T$ that connects $u$ to $t(a_j)$.
Observe that in the case where $u=s_0(a_j)$ ($u=t(a_j)$ respectively), for some $j\in [k]$, then $P^-_{j,u}=\emptyset$ ($P^+_{j,u}=\emptyset$ respectively).
Then, we define $P^-(u)=\{P^-_{1,u},\dots,P^-_{k,u}\}$ and $P^+(u)=\{P^+_{1,u},\dots,P^+_{k,u}\}$.
Intuitively, the set $P^-(u)$ contains all the paths of $T$ that will be relevant for the agents to reach $u$ from their initial positions.
The set $P^+(u)$ contains all the paths of $T$ that will be relevant for the agents to reach their targets from $u$.
Finally, let $V_u=N(u)\cap (P^-(u)\cup P^+(u))$.
That is, $V_u$ contains the neighbors of $u$ that are relevant with respect to the paths mentioned above.
Observe that $|V_u|\leq 2k$.
Let $T_u$ be the subtree of $T[V\setminus V_u]$ that contains $u$.
Since $d_T(u)>3k$, we have that $T_u$ contains at least the vertices $N(u)\setminus V_u$.
We are now ready to describe the pruning that we perform: starting from $T$, remove all the vertices of $T_u$, except from $u$ and $k$ of its neighbors $v_1,\dots,v_k$ in $T_u$; let $\hat{T}$ be the resulting graph.

\begin{claim}
If $I=\langle T, A, s_0, t,d, \ell\rangle$ is a yes-instance, then $\hat{I}=\langle \hat{T}, A, s_0, t,d, \ell\rangle$ is also a yes-instance.
\end{claim}
\begin{proofclaim}
Assume that $s=(s_1,\dots,s_\ell)$ is a feasible solution of $I$; we construct a feasible solution $s'=s'_1,\dots,s'_\ell$ of $I'$.
We start by setting $s'_0(a)=s_0(a)$ for every $a\in A$.
Then, for every $i\in[\ell]$ and for every $j\in [k]$, we set
\[
s'_i(a_j) =
     \begin{cases}
       v_j, &\quad\text{if } s_j(a_j)\in T_u\setminus u\\
       s_i(a_j), &\quad\text{otherwise.}\\
     \end{cases}
\]

First, we need to show that $s'_i(a)$ is a neighbor of $s'_{i-1}(a)$ for every $a\in A$. Let $a\in A$ and $i\in [\ell]$.
We distinguish the following cases:
\begin{description}
 \item[$\boldsymbol{s_i(a)\in T_u\setminus u}$ \textbf{and} $\boldsymbol{s_{i-1}(a)\in T_u\setminus u}$] Then $s'_i(a)=s'_{i-1}(a)=v_j$ for some $j\in [k]$.
 \item[$\boldsymbol{s_i(a)\in T_u\setminus u}$ \textbf{and} $\boldsymbol{s_{i-1}(a)\notin T_u\setminus u}$] Since $s$ is a feasible solution, we have that $s_{i-1}(a)=u$. Thus, $s'_i(a)=v_j$ for some $j\in [k]$ and $s'_{i-1}(a)=s_{i-1}(a)=u$.
 \item[$\boldsymbol{s_i(a)\notin T_u\setminus u}$ \textbf{and} $\boldsymbol{s_{i-1}(a)\in T_u\setminus u}$] Is analogous to the previous one.
 \item[$\boldsymbol{s_i(a)\notin T_u\setminus u}$ \textbf{and} $\boldsymbol{s_{i-1}(a)\notin T_u\setminus u}$] Then $s'_i(a)=s_i(a)$ and $s'_{i-1}(a)=s_{i-1}(a)$ and the feasibility of $s'$ is guaranteed by the feasibility of $s$.
\end{description}

Next, we need to show that $s'_\ell(a_j)=t(a_j)$ for every $j\in [k]$. This follows directly from the definitions of $s'$ and the paths $P^-_{j,u}$ and $P^+_{j,u}$ and the fact that~$s$ is a feasible schedule.

Finally, we will ensure that the connectivity constraint is preserved in $s'$.
Let $D$ and $D'$ be the communication graphs of $T$ and $\hat{T}$ respectively.
We will show that $D'[ \{ s'_i(a)\mid a\in A \}]$ is connected for every $i\in[\ell]$.
Assume that this is not true, and let $l \in [\ell]$ be one turn during which the connectivity constraint fails in $s'$.
That is, $D'[\{ s'_l(a)\mid a\in A \}]$ contains at least two connected components.
Since $s$ is a feasible solution of $I$, it follows that there exist two agents $b$ and $c$ such that $\dist_T(s_l(b),s_l(c))\leq d$ but $\dist_{\hat{T}}(s'_l(b),s'_l(c)) > d$.
By the definition of $s'$, we have that at least one of the agents $b$ and $c$ is located in $T_u\setminus u$ according to $s'_l$.
We distinguish the following cases for the values of $d\geq 2$ (we deal with the case where $d=1$ afterwards):

\begin{description}
 \item[$\boldsymbol{s'_l(b)\in T_u\setminus u}$ \textbf{and} $\boldsymbol{s'_l(c)\notin T_u\setminus u}$] It holds that $\dist_{T'}(s'_l(b),s'_l(c))\leq \dist_{T}(s_l(b),s_l(c))-1\leq d-1$.
 \item[$\boldsymbol{s'_l(b)\in T_u\setminus u}$ \textbf{and} $\boldsymbol{s'_l(c)\in T_u\setminus u}$] In this case, there exist two vertices $v_x$ and $v_z$, for some $x<z\in [k]$ which are leafs attached to $u$ in $T'$, such that $s'_l(b)=v_x$ and $s'_l(c)=v_z$. Clearly, $\dist(v_x,v_z)=2\leq d$.
\end{description}

Lastly, we deal with the particular case of $d=1$.
We additionally assume that~$\ell$ is optimal, that is, there is no shorter feasible schedule.
We claim that there exists an agent $a^*\in A$ such that $s'_l(a^*)=s_l(a^*)=u$. Let us assume that this is not true. We distinguish the following two sub-cases:
  \begin{itemize}
    \item The agent $b$ is such that $s_l(b)=s'_l(b)=x\in V(P^-(u))\cup V(P^+(u))$. In this case, and since at least one of the agents $b$ and $c$ must be in $T_u\setminus u$ during the turn $l$, the existence of $a^*$ is guaranteed by the feasibility of $I$ for $d=1$.
    \item Every agent $a\in A$ is such that $s_l(a)\in T_u$. Consider $l_0<l$, the last turn that the vertex $u$ was occupied before the turn $l$, say by the agent $e$. Also, let $l<l_1\leq \ell$ be the first turn that the vertex $u$ will be occupied after the turn $l$, say by the agent $g$. Let us also consider what happens in $T'$ according to $s'$ during these turns. We have that at the turn $l_0$ all the agents of $A$ are on the leafs attached to $u$ except for the agent $e$ which is on $u$. Then, on the turn $l_0+1$, all the agents of $A$ are on the leafs attached to $u$, and remain there until the turn $l_1$, where the agent $g$ moves to $u$. We then modify $s'$ by setting $s'_{l_0+1}(u)=g$. If $l_0+1=l$, we have a contradiction as $u$ is assumed to be empty during the turn $l$ according to $s'$. Thus, $l_0+1<l$. But in this case, the makespan of the modified $s'$ is smaller than $\ell$ (note that $s'$ is a solution for~$I$ as well). But $\ell$ was assumed to be optimal, leading to a contradiction.
 \end{itemize}
In all the cases above we are lead to a contradiction, proving that the connectivity constraint of $s$ is indeed preserved by~$s'$.
\end{proofclaim}

For the other direction of the equivalence, it holds trivially since $\hat{T}$ is a subgraph of $T$.
Therefore, we apply the above pruning procedure exhaustively, that is, as long as there is a vertex of degree at least $3k+1$.
In this way, we obtain the tree~$T'$ and the instance~$I'$.
\end{proof}

\subsection{Tree-like Structures}
\begin{theorem}\label{thm:fpt-tw-d-l}
	The \MAPFCCShort problem is in \FPT{} parameterized by the treewidth~$w$ of $G$ plus the makespan~$\ell$ and the communication range~$d$.
\end{theorem}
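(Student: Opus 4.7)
\bigskip
\noindent\textit{Proof plan.}
The plan is to apply Courcelle's theorem on a labeled augmentation of $G$, controlling the treewidth by a result of \citet{GOR21}. First, if $\dist_G(s_0(a), t(a)) > \ell$ for some agent $a$, the instance is trivially negative, so we may assume $\dist_G(s_0(a), t(a)) \le \ell$ for every $a \in A$. We then construct the \emph{augmented graph} $G^+$ by adding, for every agent $a$, an edge between $s_0(a)$ and $t(a)$; we equip $G^+$ with an edge label distinguishing these new \emph{agent-edges} from the original edges of $G$, and a vertex label marking vertices that are starts, targets, or both. By the cited construction, $\tw(G^+)$ is bounded by a computable function $g(w, \ell)$, while the total number of labels stays $O(1)$---the prerequisites for the labeled extension of Courcelle's theorem.

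Next, we design an \MSOtwo{} formula $\varphi$ on the labeled $G^+$ that holds exactly when a feasible schedule of makespan at most $\ell$ exists. Because $\ell$ is a parameter, $\varphi$ may existentially quantify, for each time $i \in \{0, 1, \ldots, \ell\}$, a vertex set $O_i$ (positions occupied at time $i$) and, for each $i \in [\ell]$, an edge set $M_i \subseteq E(G)$ together with an auxiliary vertex set $Y_i \subseteq V$ flagging vertices at which an agent stays during the $i$-th step. With these $O(\ell)$ set variables, $\varphi$ enforces: the boundary conditions $O_0 = \{s_0(a) : a \in A\}$ and $O_\ell = \{t(a) : a \in A\}$; that $M_i$ together with the identity on $Y_i$ realizes a bijection between $O_{i-1}$ and $O_i$ in which every moving agent traverses an original edge of $G$; the no-swap rule, a local first-order condition on each $M_i$; and the $d$-connectivity of each $O_i$, which, since $d$ is a parameter, becomes a first-order condition of fixed quantifier depth.

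The critical step---and the very reason the augmented graph is central---is enforcing the \emph{identities} of the agents, i.e.\ that the agent starting at $s_0(a)$ ends at $t(a)$. Since $k$ is not among the parameters, we cannot quantify over agents individually. Instead, each agent $a$ is identified with its agent-edge $e_a = \{s_0(a), t(a)\} \in E(G^+)$, and $\varphi$ universally quantifies over agent-edges. For each such edge with endpoints $u$ and $v$, $\varphi$ demands, via $\ell - 1$ nested first-order quantifiers $\exists v_1, \ldots, v_{\ell-1}$, a trajectory $u = v_0, v_1, \ldots, v_\ell = v$ consistent with the sets $O_i, M_i, Y_i$ in the sense that, for each $i$, either $v_{i-1} = v_i \in Y_i$, or $\{v_{i-1}, v_i\} \in M_i$ with the two endpoints correctly placed in $O_{i-1}, O_i$. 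Because $\ell$ is bounded, this subformula has size independent of $k$. Invoking the labeled Courcelle's theorem on $G^+$, with treewidth at most $g(w, \ell)$, then yields the claimed \FPT{} algorithm running in time $h(w, \ell, d) \cdot |G|^{O(1)}$ for a computable $h$.

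The main obstacle I anticipate is twofold. The first is the careful \MSOtwo{} encoding of a single movement step---in particular, handling vertices that simultaneously serve as the exit for one agent and the entry for another (a configuration permitted once swaps are excluded), while recovering the correct orientation of each edge of $M_i$ using only vertex and edge sets. The second is invoking the construction of \citet{GOR21} with enough precision to confirm that the augmentation (applied to pairs at distance at most $\ell$) preserves both the treewidth bound and the compatibility with the labeled variant of Courcelle's theorem.
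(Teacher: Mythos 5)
Your overall architecture differs from the paper's: the paper builds a \emph{time-expanded} graph (the strong product of $G$ with a path on $\ell+1$ layers, plus one labelled edge per agent joining $s_0(a)_0$ to $t(a)_\ell$), reformulates feasibility as a system of vertex-disjoint layer-respecting paths, and encodes that in \MSOtwo; you instead keep $G$ itself, add the agent edges directly, and carry the time dimension in $O(\ell)$ set variables. Both halves of your plan contain a genuine gap. First, the treewidth claim: it is \emph{not} true that adding the agent edges to $G$ keeps the treewidth bounded by a function of $w$ and $\ell$ merely because every agent edge joins vertices at distance at most $\ell$. Take $G$ to be $n$ disjoint paths on $n$ vertices each plus one universal vertex (treewidth $2$, diameter $2$), and let the agents' start/target pairs be the column pairs $\bigl((i,j),(i+1,j)\bigr)$ of the $n\times n$ grid; then $s_0$ and $t$ are injective, every pair is at distance $2$, yet the augmented graph contains the $n\times n$ grid and has treewidth $\Omega(n)$. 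The bound you need holds only \emph{conditionally on the instance being a yes-instance} (each vertex is visited by at most $\ell+1$ agents, so one can add both terminals of each agent to every bag its trajectory meets and blow bags up by a factor $O(\ell)$), and consequently your algorithm must explicitly compute the treewidth of the augmented graph and reject if it exceeds the bound --- a step your plan omits. The paper does exactly this conditional-bound-plus-test for its time-expanded graph (Claim~\ref{claim:yes-instance-tw} and the ``Full algorithm'' paragraph).

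Second, the encoding is not sound as sketched. Checking each agent's trajectory by $\ell-1$ first-order witnesses against the \emph{set-level} data $O_i, M_i, Y_i$ does not force the trajectories of different agents to be jointly realizable: if at step $i$ the agents rotate along a cycle $v_1\to v_2\to v_3\to v_1$, then every vertex of the cycle lies in $O_{i-1}\cap O_i$ and is incident to two edges of $M_i$, so the orientation of each edge is not determined by the sets; one agent may certify the move $v_1\to v_2$ while another certifies $v_2\to v_1$, producing a collision (or an undetected swap) that no per-step, per-agent condition in your formula rules out. Tightening the degree condition on $M_i$ to ``exactly one incident edge'' breaks legitimate rotations instead, so the fix is not a local tweak; recovering a consistent orientation essentially forces you back to something like the paper's layered graph, where each (vertex, time) pair is a distinct node, the direction of every $\mathsf{cross}$ edge is determined by its layers, and injectivity at each time step is exactly vertex-disjointness of the paths. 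A smaller slip: $d$-connectivity of $O_i$ is not a first-order condition (connectivity never is); it needs the \MSOtwo partition trick the paper uses, with only the distance-at-most-$d$ predicate being first-order.
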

\begin{proof}
Let $I = \langle G, A, s_0 , t, d, \ell\rangle$ be an instance of \MAPFCCShort.
Our goal is to construct an auxiliary graph~$G_I$ with special vertex and edge labels, such that (i) the treewidth of $G_I$ is at most~$3 \ell w$ whenever $I$ is a yes-instance, and (ii) the existence of a solution can be expressed by an \MSOtwo sentence over $G_I$.
The claim then follows by testing the treewidth of $G_I$ followed by a standard use of Courcelle's theorem~\cite{Courcelle90}.

Let $G = (V,E)$ be the graph of the input instance.
We start by constructing a labeled auxiliary graph $G_I$.
Its vertex set is composed of sets $V_0, \dots, V_\ell$ where $V_i$ contains one copy of each vertex of $V$, that is, $V_i = \{v_i \mid v \in V\}$ (we denote by $v_i$ the copy of the vertex $v$ in $V_i$).
We refer to these sets as \emph{layers} and we give all the vertices in $V_i$ a vertex label $\mathsf{vertex_i}$.
The graph $G_I$ contains four different types of edges with four distinct edge labels defined as follows.
\begin{enumerate}
\item For every $v \in V$ and $i \in [\ell]$, we add to $G_I$ the edge $v_{i-1}v_i$ with an edge label $\mathsf{copy}$.
\item For every $uv \in E$ and $i \in \intzero{\ell}$, we add to $G_I$ the edge $u_i v_i$ with an edge label $\mathsf{communication}$.
\item For every $uv \in E$ and $i \in [\ell]$, we add to $G_I$ the edges $u_{i-1}v_i$ and $v_{i-1}u_i$ with an edge label $\mathsf{cross}$.
\item Finally, for every agent $a \in A$, we add to $G_I$ the edge $s_0(a)_0 t(a)_\ell$ with an edge label $\mathsf{agent}$.
\end{enumerate}

Observe that the first three types of edges correspond exactly to the strong product of $G$ with a path of length~$\ell$.
In a sense, the construction combines the time-expanded graphs used previously for \MAPFShort~\cite{FKKMO24} with the augmented graphs considered for edge-disjoint paths~\cite{ZTN00,GOR21}.
Importantly, we observe that the existence of a solution is equivalent to the existence of a set of vertex-disjoint paths in~$G_I$ with some additional properties.

\begin{observation}\label{obs:yes-instance-paths}
The instance $I$ is a yes-instance if and only if there exists a set of vertex-disjoint paths ${\mathcal{P} = \{P_a \mid a \in A \}}$ such that
\begin{enumerate}
\item each $P_a$ contains exactly one vertex from each layer,\label{cond:yes-instance-paths-1}
\item the endpoints of each $P_a$ are the vertices $s_0(a)_0$ and $t(a)_\ell$,\label{cond:yes-instance-paths-2}
\item there are no two paths $P_a$ and $P_b$, vertices $u,v \in V$  and $i \in [\ell]$ such that $P_a$ contains the edge $u_{i-1}v_i$ and $P_b$ contains the edge $v_{i-1}u_i$, and\label{cond:yes-instance-paths-3}
\item for each $i \in \intzero{\ell}$, the set of vertices $W_i \subseteq V_i$ visited by paths from~$\mathcal{P}$ forms a $d$-connected set in~$G[V_i]$.\label{cond:yes-instance-paths-4}
\end{enumerate}
\end{observation}

We show later how to translate these properties into \MSOtwo predicates.
Unfortunately, it is not guaranteed that~$G_I$ must have a small treewidth due to the edges that connect the targets and destinations of the individual agents.
However, we can bound its treewidth whenever $I$ is a yes-instance.

\begin{claim}\label{claim:yes-instance-tw}
If $I$ is a yes-instance, then the treewidth of $G_{I}$ is at most $O(\ell w)$, where $w$ is the treewidth of $G$.
\end{claim}
\begin{proofclaim}
Let $(T, \beta)$ be a tree decomposition of~$G$ of optimal width~$w$.
First, we consider a graph $G'_I$ obtained from $G_I$ by considering only the edges with labels~$\mathsf{copy}$, $\mathsf{communication}$ and $\mathsf{cross}$.
We define its tree decomposition $(T, \beta')$ by replacing every occurrence of a vertex~$v$ in any bag with its $\ell+1$ copies $v_0, \dots, v_\ell$.
It is easy to see that this is a valid tree decomposition of~$G'_I$ of width~$O(\ell w)$.

Now, we define a tree decomposition~$(T,\beta'')$ of~$G_I$ assuming that $I$ is a yes-instance.
By Observation~\ref{obs:yes-instance-paths}, there is a set of vertex-disjoint paths $\{P_a \mid a \in A\}$ connecting the terminals of each agent.
We obtain $\beta''$ from $\beta'$ by adding the vertices $s_0(a)_0$ and $t(a)_\ell$ to every bag intersected by the path~$P_a$ for each $a \in A$.
Observe that for every vertex $v \in G_I$, the set of nodes~$x$ with $v \in \beta(x)$ remains connected and, moreover, we guarantee that $s_0(a)_0$ and $t(a)_\ell$ appear together in some bag, for example, a bag that originally contained only one of the terminals.

Finally, let us bound the width of the tree decomposition~$(T, \beta'')$.
Since every vertex can lie on at most one path~$P_a$ for some~$a \in A$, we added to~$\beta''(x)$ at most two new vertices for every vertex~$v \in\beta'(x)$.
Therefore,  $\beta''(x)$ contains at most $3\cdot|\beta'(x)|$ vertices and the tree decomposition $(T, \beta'')$ has width $O(\ell w)$ as promised.
\end{proofclaim}

Now, we show how to encode the existence of a feasible solution into an \MSOtwo sentence.
Formally, we construct \MSO sentence over the signature consisting of a single binary relation symbol $\inc$ that verifies the incidence between a given vertex and edge, and unary relation symbols $\mathsf{agent}$, $\mathsf{copy}$, $\mathsf{cross}$, $\mathsf{communication}$ and $\mathsf{vertex_0}, \dots, \mathsf{vertex_\ell}$ that exactly correspond to the respective edge and vertex labels.
We proceed in two steps.
First, we translate Observation~\ref{obs:yes-instance-paths} into an existential statement about edge and vertex sets in~$G_I$ with special properties expressed only in terms of the labels.
Afterwards, we show how to encode it in \MSOtwo.

\begin{claim}\label{claim:yes-instance-paths-mso}
The instance $I$ is a yes-instance if and only if there exists a set of edges $S$ and sets of vertices $X_0, \dots X_\ell$ in $G_I$ such that
\begin{enumerate}
\item all vertices in $X_i$ are labelled $\mathsf{vertex_i}$ for every $i \in \intzero{\ell}$,\label{cond:paths-mso-1}
\item all edges in $S$ are labelled $\mathsf{copy}$ or $\mathsf{cross}$,\label{cond:paths-mso-2}
\item every vertex in $X_i$ for $i \in [\ell-1]$ is incident to exactly two edges from~$S$ that connect it to vertices in $X_{i-1}$ and~$X_{i+1}$,\label{cond:paths-mso-3}
\item every vertex in $X_0$ and $X_\ell$ is incident to exactly one edge from~$S$,\label{cond:paths-mso-4}
\item every vertex outside of $X_0 \cup \dots \cup X_\ell$ is not incident to any edge from $S$,\label{cond:paths-mso-5}
\item for every edge~$e$ with label~$\mathsf{agent}$, there is a subset $T \subseteq S$ such that the endpoints of~$e$ are incident to exactly one edge in~$T$ and every other vertex is incident to either zero or two edges from~$T$,\label{cond:paths-mso-6}
\item there are no two edges $u_1 v_1, u_2 v_2 \in S$ such that edges $u_1 v_2$ and $u_2 v_1$ both exist and are labelled $\mathsf{copy}$, and\label{cond:paths-mso-7}
\item $X_i$ forms a $d$-connected set with respect to the edges labelled $\mathsf{communication}$ for each $i\in\intzero{\ell}$.\label{cond:paths-mso-8}
\end{enumerate}
\end{claim}
\begin{proofclaim}
First, let us assume that $I$ is a yes-instance and let $\{P_a \mid a \in A\}$ be the set of vertex-disjoint paths guaranteed by Observation~\ref{obs:yes-instance-paths}.
It is straightforward to check that all properties~(\ref{cond:paths-mso-1})--(\ref{cond:paths-mso-8}) are satisfied when $S$ consists of all edges contained in these paths and $X_i$ for each~$i$ consists of all the vertices in the layer~$V_i$ contained in some path~$P_a$.

Now let us assume that there exist sets $S$ and $X_0, \dots, X_\ell$ that satisfy (\ref{cond:paths-mso-1})--(\ref{cond:paths-mso-8}).
The properties~(\ref{cond:paths-mso-1})--(\ref{cond:paths-mso-5}) guarantee that $S$ forms the edge set of a set of vertex-disjoint paths~$\mathcal{P}$ where each path~$P \in \mathcal{P}$ contains exactly one vertex from each layer with endpoints in layers~$V_0$ and~$V_\ell$.
Therefore, the set of paths~$\mathcal{P}$ satisfies condition~(\ref{cond:yes-instance-paths-1}) of Observation~\ref{obs:yes-instance-paths}.
Moreover, the set~$X_i$ contains exactly the vertices from the layer~$V_i$ that lie on some path $P \in \mathcal{P}$.
Property~(\ref{cond:paths-mso-6}) verifies that~$\mathcal{P}$ contains a path~$P_a$ connecting the vertices $s_0(a)_0$ and $t(a)_\ell$ for each $a \in A$, i.e., condition~\ref{cond:yes-instance-paths-2} of Observation~\ref{obs:yes-instance-paths}.
Property~(\ref{cond:paths-mso-7}) is equivalent to condition~(\ref{cond:yes-instance-paths-3}) of Observation~\ref{obs:yes-instance-paths} as it enforces that two agents cannot swap their positions along a single edge.
And finally, property~(\ref{cond:paths-mso-8}) enforces the connectivity requirement of~\MAPFCCShort equivalently to condition~\ref{cond:yes-instance-paths-4} of Observation~\ref{obs:yes-instance-paths}.
\end{proofclaim}

We proceed to construct an \MSOtwo sentence~$\varphi$ equivalent to Claim~\ref{claim:yes-instance-paths-mso}.
Its general form is \[\varphi \coloneqq \exists S, X_0, \dots, X_\ell \; \left( \bigwedge_{i=1}^i \varphi_i(S, X_0,\dots, X_\ell) \right) \,,\]
where each $\varphi_i$ is an \MSOtwo formula with $\ell+2$ free variables $S$, $X_0, \dots, X_\ell$ that encodes the property~$i$ of Claim~\ref{claim:yes-instance-paths-mso}.

In order to simplify the upcoming definitions, we define a predicate $\deg_k(v ,F)$ expressing that vertex~$v$ is incident with exactly~$k$ edges from~$F$ for $k=0,1,2$.
We define:

\begin{align*}
  \deg_0(v,F) &\coloneqq \nexists e \bigl(e \in F \land \inc(v,e)\bigr) \,, \\
  \deg_1(v,F) &\coloneqq \exists e \in F \; \left(\inc(v,e) \land \forall f \in F \Bigl( \bigl(\inc(v, f)\bigr) \rightarrow f = e \Bigr) \right) \,,  \\
  \deg_2(v,F) &\coloneqq \exists e_1, e_2 \in F \; \left(
    \bigl(\inc(v,e_1) \land \inc(v,e_2)\bigr) \land e_1 \neq e_2
    \land \forall f \in F \Bigl( \bigl(\inc(v, f)\bigr) \rightarrow (f = e_1 \lor f = e_2) \Bigr) \right) \,.
\end{align*}


Additionally, we use the predicate `$e = uv$' to express that edge~$e$ joins vertices $u$ and~$v$.
Formally, it is a syntactic shorthand for $\inc(u,e) \land \inc(v,e) \land u \neq v$.

The encoding of properties (\ref{cond:paths-mso-1})--(\ref{cond:paths-mso-5}) in \MSOtwo is now fairly straightforward albeit technical
\begin{align*}
\varphi_1(S, X_0,\dots, X_\ell) &\coloneqq \bigwedge_{i=0}^\ell \forall v \bigl( v \in X_i \rightarrow \mathsf{vertex_i}(v) \bigr), \\
\varphi_2(S, X_0,\dots, X_\ell) &\coloneqq \forall e \Bigl( e \in S \rightarrow \bigl(\mathsf{copy}(e) \lor \mathsf{cross}(e)\bigr) \Bigr), \\
\varphi_3(S, X_0,\dots, X_\ell) &\coloneqq
\bigwedge_{i=1}^{\ell-1}
\left(
\forall v \left(
 v \in X_i \rightarrow
\exists\, u, w, e, f
\left(
\begin{gathered}
\deg_2(v, S) \land u \in X_{i-1} \land\; w \in X_{i+1}\\ \land\; e,f \in S \land e=uv \land f = vw
\end{gathered}
\right)
\right)\right),\hspace{40pt}& \\
\varphi_4(S, X_0,\dots, X_\ell) &\coloneq
\forall v \bigl( \left(v \in X_0 \lor v \in X_\ell \right) \rightarrow \deg_1(v, S) \bigr),\hspace{10pt}& \\
\varphi_5(S, X_0,\dots, X_\ell) &\coloneqq \forall v \left( \left(\bigwedge_{i=0}^\ell v \notin X_i \right) \rightarrow \deg_0(v, S) \right).
	\end{align*}

The encoding of property~(\ref{cond:paths-mso-6}) is cumbersome but nevertheless still straightforward as

\[
\varphi_6(S, X_0,\dots, X_\ell) \coloneqq \forall e \\
\left(
\mathsf{agent}(e) \rightarrow \exists\, u, v, T
\left(
\begin{gathered}
T \subseteq S \land
e = uv
\land \deg_1(u, T) \land \deg_1(v,T) \\
\land\; \forall w \left(
\deg_0(w,T) \lor \deg_2(w,T)
\lor w=v \lor  w=u
\right)
\end{gathered}
\right)
\right)\,.
\]

Property~(\ref{cond:paths-mso-7}) is expressed readily as
\[
\varphi_7(S, X_0,\dots, X_\ell) \coloneqq \nexists u_1, v_1, u_2, v_2, e_1, e_2, f_1, f_2
\left(
\begin{gathered}
	e_1 = u_1 v_1 \land e_2 = u_2 v_2 \land f_1 = u_1 v_2 \land f_2 = u_2 v_1\\ \land\;
e_1,e_2 \in S
\land \mathsf{copy}(f_1) \land \mathsf{copy}(f_2)
\end{gathered}
\right)\,.
\]

In order to express the last property, we first need to encode the distance with respect to the edges with the label $\mathsf{communication}$.
Specifically, we define the predicate $\dist_k(u,v)$ that encodes that this distance between $u$ and $v$ is at most~$k$.
We define $\dist_0(u,v) \coloneqq (u = v)$ and we define $\dist_k(u,v)$ for $k > 0$ inductively as
\[
\dist_k(u,v) \coloneqq \dist_{k-1}(u,v) \lor
\exists w, e \Bigl( \dist_{k-1}(u,w) \land \mathsf{inner}(e) \land e = wv \Bigr).
\]

We now wish to define a predicate $connected_k(X)$ verifying that a vertex set~$X$ is $k$-connected with respect to the $\mathsf{communication}$ edges.
However, it is easier to express that $X$ is not $k$-connected.
That happens if and only if we can partition $X$ into two sets $A$ and $B$ such that for any pair of vertices $u \in A$ and $v \in B$, their distance with respect to the $\mathsf{communication}$ edges is strictly more than~$k$.
Hence, we define
\[
connected_k(X) \coloneqq \nexists A,B
\left(
\begin{gathered}
\forall v \left(v \in X \rightarrow \left(
(v \in A \lor v \in B) \, \land   \neg(v\in A \land v \in B)
\right)\right) \\
\land \; \forall u,v ((u \in A \land v \in B) \rightarrow \neg \dist_k(u,v))
\end{gathered}
\right),
\]
where the second line verifies that $A$, $B$ form a  partition of~$X$.
The definition of $\varphi_8(S, X_0,\dots, X_\ell)$ is immediate
\[\varphi_8(S, X_0,\dots, X_\ell) \coloneqq \bigwedge_{i=0}^\ell connected_d(X_i).\]

\paragraph{Full algorithm.}
The algorithm first computes the treewidth~$w$ of~$G$ in $2^{O(w^3)} n$ time~\cite{Bodlaender96}.
Afterwards, it constructs the graph~$G_I$ and checks whether its treewidth is within the bound given by Claim~\ref{claim:yes-instance-tw} using the same algorithm as in the first step.
If not, then it immediately outputs a negative answer.
Otherwise, it uses the celebrated Courcelle's theorem~\cite{Courcelle90} to evaluate $\varphi$ on the obtained tree decomposition of~$G_I$ and outputs its answer.
The correctness follows from Claims~\ref{claim:yes-instance-tw} and~\ref{claim:yes-instance-paths-mso}.
\end{proof}

Interestingly, the $d$-connectivity requirement can be replaced in Theorem~\ref{thm:fpt-tw-d-l} by any property definable in \MSOtwo, e.g., independent or dominating set.
In other words, we can put a wide variety of requirements on the positions of agents in each step, and
we obtain an effective algorithm parameterized by treewidth plus makespan for any such setting.
Note that this closely resembles reconfiguration problems under the parallel token sliding rule~\cite{BBM23,BBDLM21,KS23}.
The input of such problem is a graph~$G$ with two designated vertex sets~$S$ and $T$ of the same size and the question is whether we can move tokens initially placed on~$S$ to~$T$ by moving in each step an arbitrary subset of tokens to their neighbors  where (i) there can be at most one token at a vertex at a time, (ii) tokens cannot swap along an edge, and (iii) all intermediate positions of tokens satisfy a given condition, e.g., being an independent set.
In a timed version of such problem, we are additionally given an upper bound on the makespan~$\ell$ and ask whether the reconfiguration can be carried out in at most~$\ell$ steps.

We can naturally view the agents as tokens moving on a graph from an initial to a final set of positions using the very same reconfiguration rule.
However, the tokens are now labeled, as each single agent has its required target position.
From this point of view, the \MAPFCCShort problem can be seen as a timed labeled $d$-connected set reconfiguration problem.

Recently, \citet{MouawadNRW14} introduced a metatheorem for \MSOtwo-definable timed reconfiguration problems under a different reconfiguration rule.
The machinery of Theorem~\ref{thm:fpt-tw-d-l} can easily be adapted to a metatheorem for \MSOtwo-definable timed labelled reconfiguration problems under the parallel token sliding rule.

Importantly, Theorem~\ref{thm:fpt-tw-d-l} also implies an efficient algorithm parameterized by the number of agents plus the makespan and the communication range in planar graphs and more generally in any class of graphs with bounded local treewidth.
We say that a graph class $\mathcal{G}$ has \emph{bounded local treewidth} if there is a function $f \colon \N \to \N$ such that for every graph $G \in \mathcal{G}$, its every vertex $v$, and every positive integer~$i$ the treewidth of~$G[N^i_G[v]]$ is at most $f(i)$ where $N^i_G[v]$ is the set of all vertices in distance at most~$i$ from~$v$.
Typical examples of graph classes with bounded local treewidth are planar graphs, graphs of bounded genus, and graphs of bounded max degree.

\begin{corollary}\label{cor:fpt-ltw-d-l-k}
	The \MAPFCCShort problem is in \FPT{} parameterized by the number of agents~$k$ plus the makespan~$\ell$ and the communication range~$d$ in any graph class of bounded local treewidth.
\end{corollary}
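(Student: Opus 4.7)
The plan is to restrict the input graph to a ball of bounded radius around an arbitrary starting vertex and then invoke Theorem~\ref{thm:fpt-tw-d-l}. The key locality observation is that agent positions in any feasible solution cannot wander far: at every turn the $k$ agent positions form a $d$-connected set, so taking a spanning tree of the communication graph shows that any two such positions lie at $G$-distance at most $(k-1)d$. Since each agent moves at most one step per turn for at most $\ell$ turns, fixing $v_0 := s_0(a_0)$ for an arbitrary $a_0 \in A$, every position any agent visits in any feasible solution lies within $G$-distance $r := (k-1)d + \ell$ of $v_0$.

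Next, I would set $r' := r + d$, let $G' := G[N^{r'}_G[v_0]]$, and form the restricted instance $I' := \langle G', A, s_0, t, d, \ell\rangle$. Since the class has bounded local treewidth there is a computable function $f$ with $\tw(G') \leq f(r')$, and $r'$ is itself bounded by a function of $k$, $d$, and $\ell$. As a preliminary check the algorithm verifies that all terminals $s_0(a)$ and $t(a)$ lie in $N^r_G[v_0]$; otherwise it returns NO. It then outputs the verdict of Theorem~\ref{thm:fpt-tw-d-l} applied to $I'$, whose running time is fixed-parameter tractable in $\tw(G') + d + \ell$, hence in $k + d + \ell$.

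The main subtlety is verifying that $I$ and $I'$ are equivalent. The reverse direction is immediate because $G'$ is an induced subgraph of $G$, so moves in $G'$ are moves in $G$, and $\dist_{G'} \geq \dist_G$ ensures that $d$-connectivity in $G'$ implies $d$-connectivity in $G$. For the forward direction, the locality claim guarantees that any feasible schedule of $I$ uses only vertices of $V(G')$; moreover, the choice $r' = r + d$ ensures that any shortest path of length at most $d$ in $G$ between two vertices of $N^r_G[v_0]$ lies entirely in $N^{r'}_G[v_0]$, so $\dist_G(u_1,u_2) \leq d$ implies $\dist_{G'}(u_1,u_2) \leq d$ for all agent positions $u_1, u_2$. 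This preserves the relevant portion of the communication graph between $G$ and $G'$, which is all Theorem~\ref{thm:fpt-tw-d-l} needs.
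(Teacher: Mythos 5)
Your proposal is correct and follows essentially the same route as the paper: restrict to the ball $G[N^{r}_G[v_0]]$ around an arbitrary agent's start, use bounded local treewidth to bound $\tw$ of that ball by a function of $k$, $d$, $\ell$, and invoke Theorem~\ref{thm:fpt-tw-d-l}. Your radius $(k-1)d+\ell+d = kd+\ell$ coincides with the paper's, and your explicit justification of the extra $+d$ buffer (so that $G$-distances up to $d$ between visited vertices are preserved in the induced subgraph) and the preliminary check that all terminals lie in the ball are careful refinements of details the paper's proof leaves implicit.
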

\begin{proof}
Let $\mathcal{G}$ be a class of bounded local treewidth and let $\langle G, A, s_0 , t, d, \ell\rangle$ be an instance of \MAPFCCShort where $G \in \mathcal{G}$.
Pick an arbitrary agent $a \in A$ and set $G' = G[N^{kd + \ell}_G[s_0(a)]]$, i.e., the subgraph induced by vertices in distance at most $kd+\ell$ from~$s_0(a)$.
Due to the connectivity requirement, all agents must start within distance at most $kd$ from $s_0(a)$ and therefore, they cannot escape~$G'$ within $\ell$ steps.
Moreover, the treewidth of~$G'$ is at most $f(kd+\ell)$ for some function~$f$ depending only on~$\mathcal{G}$.
It remains to invoke the algorithm of Theorem~\ref{thm:fpt-tw-d-l}.
\end{proof}

Since planar graphs have bounded local treewidth~\cite{Eppstein00}, we directly get the following result.
\begin{corollary}\label{cor:fpt-planar-d-l-k}
    The \MAPFCCShort problem is in \FPT{} parameterized by the number of agents~$k$ plus the makespan~$\ell$ and the communication range~$d$ if the input is a planar graph.
\end{corollary}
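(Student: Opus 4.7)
The plan is simply to verify the single structural hypothesis required by Corollary~\ref{cor:fpt-ltw-d-l-k} and then invoke it. Specifically, I would cite the classical theorem of Eppstein~\cite{Eppstein00}, which states that for every planar graph $G$, every vertex $v \in V(G)$, and every positive integer $r$, the treewidth of $G[N^r_G[v]]$ is at most $3r$. With the witness function $f(r) = 3r$, the class of planar graphs fits the definition of bounded local treewidth given just before Corollary~\ref{cor:fpt-ltw-d-l-k}.

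Given that, applying Corollary~\ref{cor:fpt-ltw-d-l-k} to any instance $\langle G, A, s_0, t, d, \ell\rangle$ with $G$ planar yields an \FPT{} algorithm parameterized by $k + \ell + d$, completing the proof. There is no real obstacle here, since all substantial work has already been done: Theorem~\ref{thm:fpt-tw-d-l} handles the \MSOtwo encoding and the treewidth bound on the auxiliary graph $G_I$, and Corollary~\ref{cor:fpt-ltw-d-l-k} performs the localisation step by restricting attention to the ball $G[N^{kd+\ell}_G[s_0(a)]]$ around a fixed agent, which the $d$-connectivity condition prevents any agent from leaving within $\ell$ turns. The present corollary is therefore a direct instantiation of the local-treewidth version of our main algorithmic result on the canonical example of a graph class with bounded local treewidth.
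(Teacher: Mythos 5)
Your proposal is correct and matches the paper's own argument exactly: the paper likewise derives this corollary immediately from Corollary~\ref{cor:fpt-ltw-d-l-k} by citing that planar graphs have bounded local treewidth~\cite{Eppstein00}. Your only addition is making the witness function $f(r)=3r$ explicit, which the paper leaves implicit.
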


\section{Conclusion}
In this paper, we initiated the study of the parameterized complexity of the \MAPFCC problem. Our work opens multiple new research directions that can be explored. First and foremost is the question of checking the efficiency of our algorithms in practice. In particular, the \textsf{MSO} encoding we provide for Theorem~\ref{thm:fpt-tw-d-l} is implementable by employing any state-of-the-art \textsf{MSO} solver (e.g.,~\cite{Langer13,BannachB19,Hecher23}). On the other hand, one could also argue that our work provides ample motivation to follow a more heuristic approach. Even in this case, it is worth checking if our exact algorithms can be used as subroutines to improve the effective running time of the state-of-the-art algorithm that is used in practice. We consider all of the above important enough to warrant their respective dedicated studies.

\section{Acknowledgments}
This work was co-funded by the European Union under the project Robotics and advanced industrial production (reg.\ no.\ CZ.02.01.01/00/22\_008/0004590).
JMK was additionally supported by the Grant Agency of the Czech Technical University in Prague, grant No.\ SGS23/205/OHK3/3T/18. FF and NM acknowledge the support by the CTU Global postdoc fellowship program.
DK, JMK, and MO acknowledge the support of the Czech Science Foundation Grant No.\ 22-19557S.

A preliminary version of our work was presented at AAAI'25~\cite{FKKMM25}.

\bibliographystyle{abbrv}
\bibliography{sample-base}

\end{document}